\newtheorem{theorem}{Theorem}
\newtheorem{lemma}{Lemma}
\newtheorem{definition}{Definition}
\newtheorem{remark}{Remark}
\newtheorem{assumption}{Assumption}
\renewcommand\arraystretch{0.75}
\def\BibTeX{{\rm B\kern-.05em{\sc i\kern-.025em b}\kern-.08em
    T\kern-.1667em\lower.7ex\hbox{E}\kern-.125emX}}
\begin{document}

\title{Resilient Quantized Consensus in \\Multi-Hop Relay Networks}
\author{Liwei Yuan, \IEEEmembership{Member, IEEE}, and Hideaki Ishii, \IEEEmembership{Fellow, IEEE} 
\thanks{This work was supported in part by the National Natural Science Foundation of China under Grant~62403188 and in part by JSPS under Grants-in-Aid for Scientific Research Grant No.~22H01508 and 24K00844.
	}
\thanks{L. Yuan is with the College of Electrical and Information Engineering, Hunan University, Changsha, 410082, China (e-mail: yuanliwei@hnu.edu.cn). }
\thanks{H. Ishii is with the Department of Information Physics and Computing, The University of Tokyo, Tokyo, 113-8656, Japan (e-mail: hideaki\_ishii@ipc.i.u-tokyo.ac.jp). }
}

\maketitle

\begin{abstract}
We study resilient quantized consensus in multi-agent systems, where some agents may malfunction. The network consists of agents taking integer-valued states, and the agents' communication is subject to asynchronous updates and time delays. We utilize the quantized weighted mean subsequence reduced algorithm where agents communicate with others through multi-hop relays. We prove necessary and sufficient conditions for our algorithm to achieve the objective under the malicious and Byzantine attack models. Our approach has tighter graph conditions compared to the one-hop algorithm and the flooding-based algorithms for binary consensus. Numerical examples verify the efficacy of our algorithm.
\end{abstract}

\begin{IEEEkeywords}
Asynchronous communication, multi-hop relay, quantization, resilient consensus.
\end{IEEEkeywords}

\section{Introduction}\label{secintro}
\IEEEPARstart{W}{ith}
the growth of concerns on cyber-security issues of multi-agent systems, distributed consensus in the presence of adversarial agents has been widely studied \cite{Lynch,teixeira2012attack,yuan2023event}. 
The faulty agents considered in existing works are of two main categories:
malicious and Byzantine agents. They can behave arbitrarily to prevent normal ones from achieving global objectives. Malicious agents must send the same false messages to their neighbors \cite{goldsmith2005wireless,leblanc2013resilient}, while Byzantine agents can send different messages to different neighbors \cite{Lynch,vaidya2012iterative}. 

We aim to solve resilient quantized consensus under the malicious and Byzantine attack models, where all agents take integer values. 
The objective is for nonfaulty, normal agents to reach consensus on a common value regardless of adversaries' misbehaviors.
Our approach exploits the mean subsequence reduced (MSR) algorithms \cite{leblanc2013resilient,vaidya2012iterative}, which have been widely used to solve resilient real-valued \textit{approximate} consensus \cite{usevitch2020determining,yuan2021resilient,sakavalas2020asynchronous,su2017reaching,yuan2022asynchronous}. 
For MSR algorithms to succeed, the graph property called robustness is critical \cite{leblanc2013resilient}.
Nevertheless, it requires relatively dense and complex networks \cite{usevitch2020determining}.
Recently, several works introduced multi-hop relays to MSR algorithms and consequently relaxed the heavy graph requirements for resilient consensus \cite{yuan2021resilient,su2017reaching,sakavalas2020asynchronous}. Such techniques allow agents to communicate with non-direct neighbors through intermediate agents' relays \cite{goldsmith2005wireless}. On the other hand, quantized consensus has been motivated by the concerns on limited memories and transmission bandwidth on nodes in wireless sensor networks \cite{goldsmith2005wireless}.
Another vital motivation is that it can achieve exact consensus among agents. In contrast, approximate consensus usually guarantees that agents achieve consensus within a bounded interval.
There have been various works studying quantized consensus without adversaries \cite{chamie2016design,aysal2008distributed} and the case with adversaries.
The work \cite{dibaji2018resilient} studied quantized consensus under the malicious model and proved a tight graph condition for a one-hop MSR algorithm to achieve the objective.

A highly related topic named \textit{exact Byzantine consensus (EBC)}\footnote{Throughout this paper, EBC refers to binary consensus under the Byzantine model unless otherwise specified. Here, agents have only binary states $\{0,1\}$.} is popular and historical in distributed computing \cite{Lynch}. The works \cite{dolev1982byzantine} and \cite{tseng2015fault} have proposed tight graph conditions for synchronous EBC in undirected and directed networks, respectively. However, in the real world, delays are almost natural in agents' communication.
It is important to analyze whether an algorithm can succeed under asynchronous updates with delays. 
Interestingly, \cite{fischer1985impossibility} has shown that there exists no deterministic algorithm solving EBC in an asynchronous distributed system even in the presence of a single crash node. However, randomization brings a new perspective to this problem \cite{benor1983another}.
Recently, the work \cite{wang2020asynchronous} studied EBC under asynchronous updates with delays in undirected networks by introducing a randomization process in their algorithm. They derived the same necessary and sufficient condition as the one in \cite{dolev1982byzantine,tseng2015fault}. In this paper, we also implement a randomized quantizer on each agent, which is critical for our algorithm.

There exist other approaches for tackling the multi-valued asynchronous (exact) Byzantine agreement in complete networks through external authentication processes, where normal agents reach consensus on certain agent’s value \cite{cachin2001secure,abraham2019asymptotically}. Such authentication enables agents to verify the sender of a message through the digital signature technique provided by the external authority.
To the best of our knowledge, the problem of asynchronous quantized consensus under the Byzantine model in incomplete directed networks without authentication processes is still open, and we will fill this gap in Section~\ref{secbyzan}.
Related works for resilient quantized consensus (including EBC) are summarized in Table~\ref{table1}.

\begin{table}[t]\label{table1}
	\caption{Related works for resilient quantized consensus.}
	\renewcommand\arraystretch{1.5}
	\centering
	%\vspace{0.15cm}
	\begin{tabular}{c|c|c|c|c|c} 
		\hline 
		
		\multicolumn{2}{c|}{}&\multicolumn{2}{c|}{Synchronous}&\multicolumn{2}{c}{Asynchronous}\\  
		\hline
		\multirow{2}*{Malicious}&Undirected $\mathcal{G}$&\multicolumn{2}{c|}{\multirow{2}*{\cite{dibaji2018resilient, khan2020exact}, [$\ast$]}}&\multicolumn{2}{c}{\multirow{2}*{\cite{dibaji2018resilient}, [$\ast$]}}\\  
		\cline{2-2}  
		&Directed $\mathcal{G}$&\multicolumn{2}{c|}{ }&\multicolumn{2}{c}{ }  \\
		\cline{1-6}
		\multirow{2}*{Byzantine}&Undirected $\mathcal{G}$&\multicolumn{2}{c|}{ \cite{dolev1982byzantine}, [$\ast$]}&\multicolumn{2}{c}{ \cite{wang2020asynchronous}, [$\ast$]}\\
		\cline{2-6}
		&Directed $\mathcal{G}$&\multicolumn{2}{c|}{\cite{tseng2015fault}, [$\ast$]}&\multicolumn{2}{c}{ \ [$\ast$]}\\
		\hline
		\multicolumn{6}{c}{\makecell[l]{Note that [$\ast$] represents this work.}}
	\end{tabular}
	\vspace*{-5mm}
\end{table}

We summarize our contributions as follows. First, we develop a novel quantized multi-hop weighted MSR (QMW-MSR) algorithm to tackle resilient quantized consensus.
Utilizing a unified analysis, we prove necessary and sufficient conditions for our algorithm to succeed with synchronous/asynchronous updates under the malicious and Byzantine attack models. Moreover, our algorithm is fully distributed and our theoretical results hold for binary consensus if agents' initial states are restricted to binary values.

%The resilient quantized consensus (including binary consensus) works \cite{khan2020exact, dolev1982byzantine, tseng2015fault, wang2020asynchronous} from computer science commonly assume that each normal node sends its values to the entire network through different paths, which is referred to as flooding. This setting corresponds to one of our cases where the multi-hop relay range is unbounded. In these works, verification on the consistency of values from other nodes plays a crucial role. However, such verification is unnecessary for our algorithm. 

%Next, we separately illustrate our advantages over related works studying different attack models.
For the malicious model, we provide graph conditions tighter than the ones in \cite{dibaji2018resilient} for both synchronous and asynchronous updates.
Compared to the synchronous binary consensus work \cite{khan2020exact}, our algorithm is more efficient on exploiting the multi-hop techniques. In particular, we use less relay hops to achieve the same level of tolerance of malicious agents compared to the flooding-based\footnote{Flooding means that each node sends its values to the entire network.} algorithm \cite{khan2020exact}. See the examples in Section~\ref{sec_sim}. 
For the Byzantine model, we mainly compare our results with EBC works using flooding-based algorithms \cite{tseng2015fault,wang2020asynchronous}. Our algorithm studies the general multi-hop case, which is more flexible and lightweight.
%Therefore, we can offer a tunable trade-off between the robustness against adversary agents and the cost of communication resources (i.e., relay range).
Similarly, we can achieve the same level of tolerance of Byzantine agents with less relay hops in comparison with \cite{tseng2015fault,wang2020asynchronous}.

The rest of this paper is organized as follows. 
Section~II outlines the system model. Section~III presents the graph notions.
Sections~IV and V derive conditions under which our algorithm guarantees quantized consensus under synchronous and asynchronous updates.
Section~VI provides examples to verify the efficacy of our algorithm.
%It also reveals the gaps between different graph structure requirements for consensus under distinct attack models.
Lastly, Section~VII concludes the paper. 
A preliminary version of this work appeared in \cite{yuan2023resilient}. The current paper contains additional results for the Byzantine model and results for asynchronous networks.

\section{Preliminaries and Problem Formulation}

%\subsection{Network Model}
Consider the directed graph $\mathcal{G} = (\mathcal{V},\mathcal{E})$ consisting of the node set $\mathcal{V}=\{1,...,n\}$ and the edge set $\mathcal{E}\subset \mathcal{V} \times \mathcal{V}$. Here, the edge $(j,i)\in \mathcal{E}$ indicates that node $i$ can directly receive information from node $j$. 
An $l$-hop path from source $i_1$ to destination $i_{l+1}$ is a sequence of distinct nodes $(i_1, i_2, \dots, i_{l+1})$, where $(i_j, i_{j+1})\in \mathcal{E} $ for $j=1, \dots, l$. 
Node $i_{l+1}$ is said to be reachable from node $i_1$. 
The subgraph of $\mathcal{G} $ induced by node set $\mathcal{H}\subseteq\mathcal{V}$ is $\mathcal{G}_\mathcal{H}=(\mathcal{V}(\mathcal{H}),\mathcal{E}(\mathcal{H}))$, where $\mathcal{V}(\mathcal{H})=\mathcal{H}$ and $\mathcal{E}(\mathcal{H})=\{(i,j)\in \mathcal{E}: i,j\in \mathcal{H}\}$. Denote by $\mathbb{R}$, $\mathbb{Z}$, and $\mathbb{Z}_+$, respectively, the sets of real numbers, integers, and nonnegative integers. Let $\mathcal{N}_i^{l-}$ be the set of nodes that can reach node $i$ via paths of at most $l$ hops.
Also, let $\mathcal{N}_i^{l+}$ be the set of nodes that are reachable from node $i$ via paths of at most $l$ hops. Node $i$ is included in both sets above.
The $l$-th power of $\mathcal{G}$, denoted by $\mathcal{G}^l$, is a multigraph with the same nodes as $\mathcal{G}$ and an edge from node $j$ to node $i$ is defined by a path of length at most $l$ hops from $j$ to $i$ in $\mathcal{G}$. The adjacency matrix $A = [a_{ij} ]$ of $\mathcal{G}^l$ is given by $\alpha \leq a_{ij}<1$ if $j\in \mathcal{N}_i^{l-}$ and otherwise $a_{ij} = 0$, where $\alpha > 0$ is a fixed lower bound. We assume that $\sum_{j=1,j\neq i}^{n} a_{ij}\leq 1$. Let $L = [b_{ij} ]$ be the Laplacian matrix of $\mathcal{G}^l$, where $b_{ii} =\sum_{j=1,j\neq i}^{n}a_{ij}$ and $b_{ij} = -a_{ij}, i\neq j$.

	We introduce our message relaying model. 
	At time $k\geq 0$, each node $i_1$ exchanges the message tuples $m_{i_1i_{l+1}}[k]=(x_{i_1}[k],P_{i_1i_{l+1}}[k])$ consisting of its state $x_{i_1}[k]$ along each path $P_{i_1i_{l+1}}[k]$ with its $l$-hop neighbor $i_{l+1}$ via the relaying process in \cite{yuan2021resilient}. In particular, when source $i_1$ sends out $m_{i_1i_{l+1}}[k]$, $P_{i_1i_{l+1}}[k]$ is a vector of length $l+1$ with the source being $i_1$ and other entries being empty. Then, the one-hop neighbor $i_2$ receives $m_{i_1i_{l+1}}[k]$ from $i_1$, and it stores $x_{i_1}[k]$ for consensus and relays $m_{i_1i_{l+1}}[k]$ to its out-neighbors only by filling $i_2$ in the second entry of $P_{i_1i_{l+1}}[k]$. Such relaying continues until $m_{i_1i_{l+1}}[k]$ reaches node $i_{l+1}$. Denote by $\mathcal{V}(P)$ the set of nodes in $P$.

\subsection{Quantized Consensus and Update Rule}\label{problemsetting} 
Consider a time-invariant directed network $\mathcal{G} = (\mathcal{V},\mathcal{E})$.
The node set $\mathcal{V}$ is partitioned into the set of normal nodes $\mathcal{N}$ and the set of adversary nodes $\mathcal{A}$, where $|\mathcal{N}|=n_N$ and $|\mathcal{A}|=n_A$. The set $\mathcal{A}$ is always unknown to the normal nodes. At each time $k\geq 0$, each node $i$ first obtains the values of neighbors within $l$ hops, then it updates its value.
When there is no attack in the network, we employ the common real-valued consensus update rule extended from the one in \cite{bullo2009distributed}, given as
\begin{equation}\label{m1}
	\begin{aligned}
		x[k+1]&=x[k] +  u[k],   \\
		u[k]&=-L[k]x[k],
	\end{aligned}
\end{equation}
where $x[k]\in \mathbb{R}^n$ and $u[k]\in \mathbb{R}^n$ are the state vector and control input vector, respectively, and $L[k]$ is the Laplacian matrix of the $l$-th power graph $\mathcal{G}^l$ determined by the messages $m_{ij}[k], i\in \mathcal{V}$ and $j\in \mathcal{N}_i^{l-}$. Using \eqref{m1}, consensus is possible if $\mathcal{G}^l$ has a rooted spanning tree; see, e.g., \cite{bullo2009distributed}.

%In many multi-agent system applications, state values of agents are preferred to be integers due to digitalization or limited memory of the agents. 
In this note, we focus on quantized consensus using the quantization function
$Q:\mathbb{R}\rightarrow \mathbb{Z}$ to transform the real-valued input in \eqref{m1} to integers \cite{aysal2008distributed,dibaji2018resilient}. Hence, the values and the inputs are constrained as $x_i [k] \in \mathbb{Z} $, $u_i [k] \in \mathbb{Z} $, $\forall i\in \mathcal{V} $. The quantization function is randomization based and is given by
\begin{equation}\label{quantizer}
	Q(y)=\left\{
	\begin{array}{lll} 
		\lfloor y\rfloor &\textup{with probability} \medspace\medspace\medspace p(y),\\
		\lceil y\rceil & \textup{with probability} \medspace\medspace\medspace 1-p(y),
	\end{array}
	\right.
\end{equation}
where $p(y)=\lceil y\rceil-y$, $\lfloor \cdot \rfloor$ and $\lceil \cdot \rceil$ denote the floor and the ceiling functions, respectively. Then, based on \eqref{m1}, we can write the quantized control input for normal node $i$ as
\begin{equation}\label{quantized_input}
	u_i[k]=Q\bigg( \sum_{j\in \mathcal{N}_i^{l-}} a_{ij}[k]x_j[k]\bigg) ,  
\end{equation}
where $a_{ij} [k]$ is the $(i, j)$th entry of the adjacency matrix $A[k]$ of graph $\mathcal{G}^l$ at time $k$. Then we denote by $x^N [k] \in \mathbb{Z}^{n_N} $ and $x^A [k] \in \mathbb{Z}^{n_A} $ the state vectors of normal nodes and adversary nodes, respectively.
Moreover, the probability $p(\cdot)$ in \eqref{quantizer} can be different on each node at each time as long as it is in $(0,1)$. Thus, \eqref{quantized_input} can be implemented in a distributed fashion.

Denote by $\mathcal{U}[k] \subset \mathcal{V} $ the set of agents updating at time $k$. The system is said to be synchronous if $\mathcal{U}[k] = \mathcal{V} $, $\forall k$, and otherwise it is asynchronous \cite{dibaji2018resilient,senejohnny2019resilience}.

\subsection{Attack Models}\label{threatmodel}

We introduce the attack models studied in this note, which are the multi-hop versions of the ones in \cite{leblanc2013resilient}, \cite{vaidya2012iterative}.

\begin{definition}
	\textit{($f$-total / $f$-local set)}
	The set of adversary nodes $\mathcal{A}$ is said to be $f$-total
	if $| \mathcal{A}| \leq f$.
	Similarly, it is said to be $f$-local if $\forall i\in \mathcal{N}$, $|\mathcal{N}_i^{l-} \cap \mathcal{A}| \leq f$.
\end{definition}

\begin{definition}
	\textit{(Malicious / Byzantine nodes)}
	An adversary node $i\in \mathcal{A}$ is said to be malicious
	if it arbitrarily modifies its own value and relayed values,\footnote{
		An adversary node may also decide not to transmit any value \cite{Lynch}.} but sends the same state
	value and same relayed values to its neighbors at each iteration. Moreover, a Byzantine node sends different state values and different relayed values to its neighbors at each iteration.
\end{definition}

The malicious model is reasonable in applications such as wireless sensor networks and robotic networks, where neighbors' information is obtained by broadcast communication \cite{goldsmith2005wireless}. The Byzantine model is generally assumed in point-to-point networks and is more adversarial given that all malicious nodes are Byzantine, but not vice versa \cite{Lynch}, \cite{leblanc2013resilient}.

Following the literature \cite{leblanc2013resilient,su2017reaching}, we make the assumption that each normal node has the knowledge of $f$ and the topology information of the graph within $l$ hops.
Moreover, we introduce the following assumption from \cite{yuan2021resilient,su2017reaching} for analytical simplicity. Yet, manipulating message paths can be easily detected; see the discussions in \cite{su2017reaching, yuan2021resilient}.

\begin{assumption}
	Each adversary node $i$ can manipulate its own state $x_i[k]$ and the values in the messages that they relay, but cannot change the path values $P$ in such messages. 
\end{assumption}

\subsection{Resilient Quantized Consensus and Algorithm}

We now introduce the type of consensus among the normal agents to be sought in this note, which is also studied in \cite{dibaji2018resilient}.

\begin{definition}
	If for any possible sets and behaviors of the
	adversary agents and any state values of the normal
	nodes, the following two conditions are satisfied,
	then we say that the normal agents reach 
	resilient quantized consensus:
	\begin{enumerate}
		\item Safety: There exists a bounded safety interval $\mathcal{S}$ determined by the initial values of the normal agents such that $x_i[k] \in \mathcal{S}, \forall i \in \mathcal{N}, k \in \mathbb{Z}_+$. 
		%The set $\mathcal{S}$ is called the safety interval.
		\item Agreement: 
		There exists a finite time $k_a \geq 0$ such that Prob$\{x^N [k_a ]\in  \mathcal{C} : x[0]\} =1 $, where the consensus set $\mathcal{C}$ is defined as
		$\mathcal{C}=\{ x\in \mathbb{Z}^{n_N} :  x_1=\cdots=x_{n_N}  \}.$
	\end{enumerate}
\end{definition}
%\vspace{2mm}

Next, we introduce our resilient Algorithm~1. It is the quantized version of the MW-MSR algorithm in \cite{yuan2021resilient}, where the notion of minimum message cover (MMC) is crucial.

\begin{definition} For a graph $\mathcal{G} = (\mathcal{V},\mathcal{E})$, let $\mathcal{M}$ be a set of messages transmitted through $\mathcal{G}$, and let $\mathcal{P}(\mathcal{M})$ be the set of message paths of all the messages in $\mathcal{M}$, i.e., $\mathcal{P}(\mathcal{M}) =\{\mathrm{path}(m):m \in \mathcal{M}\}$. A \textit{message cover} of $\mathcal{M}$ is a set of nodes $\mathcal{T}(\mathcal{M})\subset \mathcal{V}$ whose removal disconnects all message paths, i.e., for each path $P\in \mathcal{P}(\mathcal{M})$, we have $\mathcal{V}(P)\cap \mathcal{T}(\mathcal{M})\neq \emptyset$. In particular, a \textit{minimum} message cover of $\mathcal{M}$ is defined by
	\begin{equation*}
		\mathcal{T}^*(\mathcal{M})\in	\arg \min_{\substack{ \mathcal{T}(\mathcal{M}): \textup{ Cover of } \mathcal{M}}} 	\left|  \mathcal{T} (\mathcal{M})\right| . 
	\end{equation*}
\end{definition}
\vspace{1mm}

%We are ready to present the synchronous QMW-MSR algorithm in Algorithm~1.
%Intuitively speaking, in step 2 of Algorithm~1, normal node $i$ can completely remove the possible erroneous values from at most $f$ neighbors within $l$ hops. 

	We briefly explain the trimming and a detailed example can be found in \cite{yuan2021resilient}.
	In step 2(b), node $i$ checks the first $f+1$ largest values of $\overline{\mathcal{M}}_i[k]$, and if the cardinality of their MMC is no more than $f$, then it checks the first $f+2$ values of $\overline{\mathcal{M}}_i[k]$. It continues until for the first $f+h$ (with $h\geq 1$) values, the cardinality of their MMC is $f+1$. Then $\overline{\mathcal{R}}_i[k]$ is taken as the first $f+h-1$ values of $\overline{\mathcal{M}}_i[k]$.
	In step 3, it uses only the safe values in $\mathcal{M}_i[k]\setminus \mathcal{R}_i[k]$. Here, the value originating from the same node via different paths may be used multiple times, which is associated with a larger weight in \eqref{msrupdate}.

\begin{algorithm}[t]
	\caption{Synchronous QMW-MSR Algorithm     }
	\LinesNumbered 
	\KwIn{Node $i$ knows $x_i[0]$, $\mathcal{N}_i^{l-}$, $\mathcal{N}_i^{l+}$. }
	
	\For{$k\geq0$}{
		
		\SetKwBlock{newbox}{1) Exchange messages:}{}
		\newbox{
			\SetAlgoVlined
			Send $m_{ij}[k]=(x_i[k],P_{ij}[k])$ to $\forall j\in \mathcal{N}_i^{l+}$. 
			
			Receive $m_{ji}[k]=(x_j[k],P_{ji}[k])$ from $\forall j\in \mathcal{N}_i^{l-}$ and store them in the set $\mathcal{M}_i[k]$.
			
			Sort $\mathcal{M}_i[k]$ in an increasing order based on the message values (i.e., $x_j[k]$ in $m_{ji}[k]$).
		}
		
		\SetKwBlock{newbox}{2) Remove extreme values:}{}
		\newbox{
			\SetAlgoVlined
			
			(a) Define two subsets of $\mathcal{M}_i[k]$:
			\begin{equation*}
				\overline{\mathcal{M}}_i[k]=\{ m\in \mathcal{M}_i[k]: \mathrm{value}(m)> x_i[k]  \},
			\end{equation*}
			\begin{equation*}
				\underline{\mathcal{M}}_i[k]=\{ m\in \mathcal{M}_i[k]: \mathrm{value}(m)< x_i[k]  \}.
			\end{equation*}
			
			(b) Get $\overline{\mathcal{R}}_i[k]$ from $\overline{\mathcal{M}}_i[k]$:
			
			\vspace{1mm}
			
			\eIf{   $\left|  \mathcal{T}^* (\overline{\mathcal{M}}_i[k])\right| <f$ }{
				
				\vspace{1mm}
				$\overline{\mathcal{R}}_i[k]=\overline{\mathcal{M}}_i[k]$;
			}
			{
				Choose $\overline{\mathcal{R}}_i[k]$ s.t. (i)
				$\forall m\in \overline{\mathcal{M}}_i[k]\setminus \overline{\mathcal{R}}_i[k]$, $\forall m'\in \overline{\mathcal{R}}_i[k]$, $\mathrm{value}(m) \leq \mathrm{value}(m')  \medspace\medspace \medspace \medspace  $ 
				and (ii) $\left|  \mathcal{T}^* (\overline{\mathcal{R}}_i[k])\right| =f$. 
				
			}
			
			(c) Similar to (b), get $\underline{\mathcal{R}}_i[k]$ from $\underline{\mathcal{M}}_i[k]$, which contains smallest message values.
			
			(d) $\mathcal{R}_i[k]=\overline{\mathcal{R}}_i[k]\cup\underline{\mathcal{R}}_i[k]$.
		}
		
		\SetKwBlock{newbox}{3) Update:}{}
		\newbox{
			\SetAlgoVlined
			$\medspace\medspace \medspace\medspace\medspace\medspace\medspace\medspace a_{i}[k]=1/(\left| \mathcal{M}_i[k]\setminus \mathcal{R}_i[k] \right| )$,
			\begin{equation}
				x_i[k+1]=Q\bigg(\sum_{m\in \mathcal{M}_i[k]\setminus \mathcal{R}_i[k]} a_{i}[k] \medspace\medspace \mathrm{value}(m)\bigg).   \label{msrupdate}
			\end{equation}
			\vspace{-5.0mm}
		}
		\KwOut{$x_i[k+1]$.}
	}
	\vspace{-1.0mm}
\end{algorithm}

\section{Graph Robustness with Multi-Hop Communication}

\subsection{Robustness with Multi-Hop Communication}\label{sec_robustness}

We introduce the notions of robustness and strict robustness
with $l$ hops. The first one is a tight graph condition guaranteeing
resilient consensus using MSR algorithms \cite{leblanc2013resilient,yuan2021resilient}.

\begin{definition}\label{rs-robust}
	\textit{($(r,s)$-robustness with $l$ hops)}
	A directed graph $\mathcal{G} = (\mathcal{V},\mathcal{E})$ is said to be $(r,s)$-robust with $l$ hops with respect to a given set $\mathcal{F}\subset \mathcal{V}$,
	if for every pair of nonempty disjoint subsets $\mathcal{V}_\text{1},\mathcal{V}_\text{2}\subset \mathcal{V}$, at least one of the following conditions holds:
%	(1) $\mathcal{Z}_{\mathcal{V}_1}^r=\mathcal{V}_1$; 
%	(2) $\mathcal{Z}_{\mathcal{V}_2}^r=\mathcal{V}_2$;
%	(3) $\left| \mathcal{Z}_{\mathcal{V}_1}^r\right| +\left| \mathcal{Z}_{\mathcal{V}_2}^r\right| \geq s$,
		\begin{enumerate}
				\item $\mathcal{X}_{\mathcal{V}_1}^r=\mathcal{V}_1$;
				\item $\mathcal{X}_{\mathcal{V}_2}^r=\mathcal{V}_2$;
				\item  $\left| \mathcal{X}_{\mathcal{V}_1}^r\right| +\left| \mathcal{X}_{\mathcal{V}_2}^r\right| \geq s$,
		\end{enumerate}
	
	\noindent where $\mathcal{X}_{\mathcal{V}_a}^{r}=\{i\in \mathcal{V}_a : |\mathcal{I}_{i, \mathcal{V}_a}^{\mathcal{F}}| \geq r \}$, ($a=1,2$). Here, $\mathcal{I}_{i, \mathcal{V}_a}^{\mathcal{F}}$ is the set of independent paths to node $i\in\mathcal{V}_a$ of at most $l$ hops originating from nodes outside $\mathcal{V}_a$ and all these paths do not have any nodes in set $\mathcal{F}$ as intermediate nodes.\footnote{In independent paths to node $i$, only node $i$ is common. Moreover, nodes in $\mathcal{F}$ can be source or destination nodes of these paths}.
	Moreover, if graph $\mathcal{G}$ satisfies this property with respect to any set $\mathcal{F}$ satisfying the $f$-total or $f$-local model, then we say that $\mathcal{G}$ is \textit{$(r,s)$-robust with $l$ hops} (under the $f$-total or $f$-local model). If $\mathcal{G}$ is $(r,1)$-robust with $l$ hops, it is also \textit{$r$-robust with $l$ hops}.
\end{definition}

To satisfy $|\mathcal{I}_{i, \mathcal{V}_a}^{\mathcal{F}}| \geq r$, there should be at least $r$ source nodes outside $\mathcal{V}_a$ and at least one independent path of length at most $l$ hops from each of the $r$ source nodes to node $i$.

To evaluate the robustness with $l\geq 2$ hops of a given graph, we need to check over all the possible subcases of different $\mathcal{F}$ satisfying the $f$-total/local model. For example, 
the cycle graph in Fig.~\ref{graph1}(a) is not $(2, 2)$-robust with $l\leq 3$ hops under the 1-total/local model. 
Consider the node sets $\mathcal{V}_1=\{1,2,3,4,5,6\}$ and $\mathcal{V}_2=\{7,8\}$ with $\mathcal{F}=\{8\}$. When $l\leq 3$, observe that $\mathcal{X}_{\mathcal{V}_1}^{2}= \emptyset$ and $\mathcal{X}_{\mathcal{V}_2}^{2} \cap \{7\}= \emptyset$. Hence, the cycle graph in Fig.~\ref{graph1}(a) is not $(2, 2)$-robust with $l\leq 3$ hops. However, for the same sets, when $l=4$, observe that  $\mathcal{X}_{\mathcal{V}_1}^{2}= \{3,4\}$. In fact, one can check all the combinations of node subsets, and the cycle graph meets the condition for being $(2, 2)$-robust with $4$ hops.

Moreover, as the number of hops $l$ increases, the robustness (and strict robustness defined next) of a graph is generally nondecreasing. 
The improvement of (strict) robustness by increasing $l$ depends on the graph structure and may differ in different graphs. See \cite{yuan2021resilient} for more graph properties of robustness with $l\geq 1$ hops.

\subsection{Strict Robustness with Multi-Hop Communication}

In \cite{yuan2022asynchronous}, we defined the notion of strict robustness with $l$ hops to characterize the necessary and sufficient condition for achieving approximate Byzantine consensus. As \cite{vaidya2012iterative,su2017reaching}, our notion is defined on the original network $\mathcal{G}$ instead of the \textit{normal network} (i.e., the subgraph of normal nodes). Hence, our notion can be verified prior to the algorithm deployment and without knowing the identity of Byzantine nodes.

\begin{figure}[t]
	\centering
	\subfigure[]{
		\includegraphics[width=1.03in]{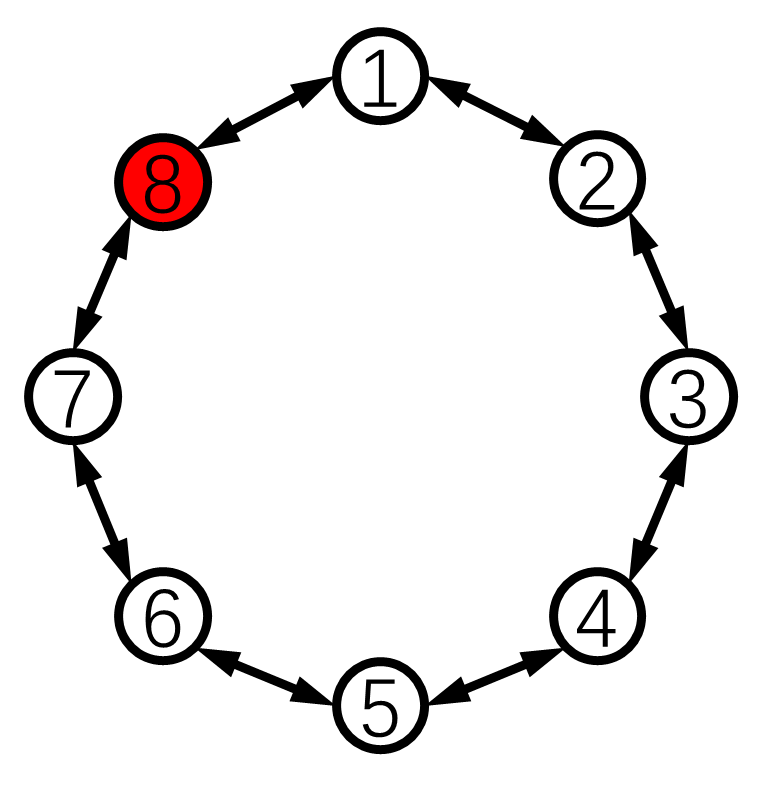}
	}
	\quad
	%\vspace{-4pt}
	\subfigure[]{
		\includegraphics[width=1.12in]{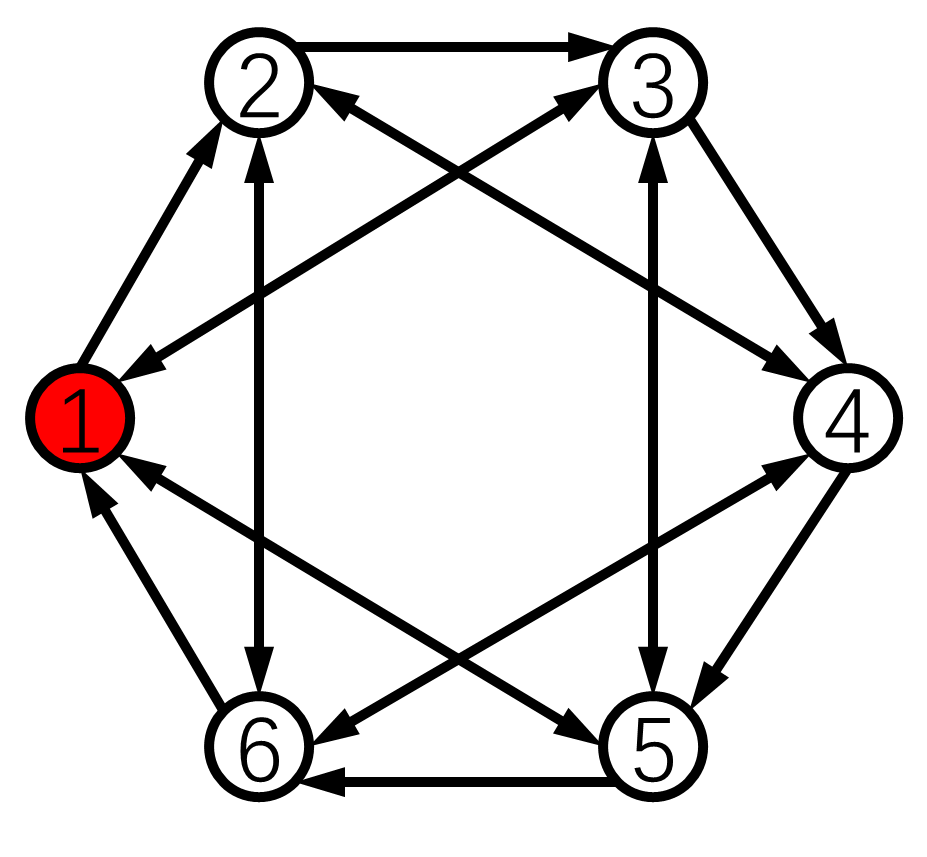}
	}
	\vspace{-4pt}
	\caption{(a) The cycle graph is not (2, 2)-robust with $l\leq 3$ hops but is (2,2)-robust with 4 hops. (b) The graph is 2-strictly robust with 2 hops and hence is (2,2)-robust with 2 hops. }
	\label{graph1}
	\vspace*{-3.0mm}
\end{figure}

\begin{definition}\label{strict_robustness}	
	\textit{($r$-strict robustness with $l$ hops)}
	Let $\mathcal{F} \subset \mathcal{V}$ and denote the subgraph of $\mathcal{G}=(\mathcal{V},\mathcal{E})$ induced
	by node set $\mathcal{H}=\mathcal{V}\setminus\mathcal{F}$ as $\mathcal{G}_{\mathcal{H}}$.
	Then, graph $\mathcal{G}$ is said to be $r$-strictly robust with $l$ hops with respect to $\mathcal{F}$ if subgraph $\mathcal{G}_{\mathcal{H}}$ is $r$-robust with $l$ hops.\footnote{The removed set $\mathcal{F}$ is still used to count the robustness of the remaining graph $\mathcal{G}_{\mathcal{H}}$ since the strict robustness is a property of the original graph $\mathcal{G}$.}
	If graph $\mathcal{G}$ satisfies this property with respect to any $f$-total or $f$-local set $\mathcal{F}$, then we say that $\mathcal{G}$ is $r$-strictly robust with $l$ hops (under the $f$-total or $f$-local model).
\end{definition}

We illustrate the idea of how to determine the strict robustness with $l$ hops of a given graph. The graph in Fig.~\ref{graph1}(b) is not $2$-strictly robust with $1$ hop. We check one subcase and other subcases follow a similar procedure. Consider the node sets $\mathcal{V}_1=\{3,5\}$ and $\mathcal{V}_2=\{2,4,6\}$ with $\mathcal{F}=\{1\}$. When $l=1$, if we remove set $\mathcal{F}$, then $\mathcal{X}_{\mathcal{V}_1}^{2}= \emptyset$ and $\mathcal{X}_{\mathcal{V}_2}^{2}= \emptyset$. Apparently, the remaining graph is not $2$-robust with $1$ hop.
However, when $l=2$, we have $\mathcal{X}_{\mathcal{V}_1}^{2}= \mathcal{V}_1$. Hence, $\mathcal{V}_1$ and $\mathcal{V}_2$ satisfy the condition for $2$-robustness with $2$ hops. After checking all the combinations of node subsets, we can conclude that this graph is $2$-strictly robust with $2$ hops, and hence according to Lemma~\ref{abc}, it is also  $(2,2)$-robust with $2$ hops.

The following lemma from \cite{yuan2022asynchronous} characterizes the topology gaps between the three graph conditions presented in related resilient consensus works \cite{leblanc2013resilient,yuan2021resilient,dibaji2018resilient,yuan2022asynchronous}.

\begin{lemma}\label{abc}
	For the following conditions on any directed graph $\mathcal{G} = (\mathcal{V},\mathcal{E})$ under the $f$-total or $f$-local model ($l\geq1$):
	
	$(A)$ $\mathcal{G}$ is $(2f+1)$-robust with $l$ hops,
	
	$(B)$ $\mathcal{G}$ is $(f+1)$-strictly robust with $l$ hops,
	
	$(C)$ $\mathcal{G}$ is $(f + 1,f+1)$-robust with $l$ hops,
	
	\noindent  it holds that $(A) \Rightarrow (B)$ and $(B) \Rightarrow (C)$. Moreover, $(C)\nRightarrow (B)$ and $(B)\nRightarrow (A)$.
\end{lemma}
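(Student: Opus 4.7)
The plan is to handle the two implications and the two non-implications in turn.

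For $(A)\Rightarrow(B)$, I would fix any $f$-total (or $f$-local) set $\mathcal{F}$ and show that the induced subgraph $\mathcal{G}_{\mathcal{H}}$ with $\mathcal{H}=\mathcal{V}\setminus\mathcal{F}$ is $(f+1)$-robust with $l$ hops. Given disjoint nonempty $\mathcal{V}_1,\mathcal{V}_2\subseteq\mathcal{H}$, invoke $(2f+1)$-robustness of $\mathcal{G}$ with respect to the same $\mathcal{F}$; in each of the three cases of Definition~\ref{rs-robust} this supplies a node $i$, say in $\mathcal{V}_1$, with at least $2f+1$ independent paths of length at most $l$ in $\mathcal{G}$ that avoid $\mathcal{F}$ as intermediate and have sources outside $\mathcal{V}_1$. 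Because independent paths share only the destination $i$, their sources are distinct members of $\mathcal{N}_{i}^{l-}$; the $f$-total/local hypothesis bounds $|\mathcal{N}_{i}^{l-}\cap\mathcal{F}|\leq f$, so at least $f+1$ of these paths have sources and intermediate nodes entirely inside $\mathcal{H}$. These surviving paths live in $\mathcal{G}_{\mathcal{H}}$, yielding the matching case of $(f+1)$-robustness in $\mathcal{G}_{\mathcal{H}}$.

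For $(B)\Rightarrow(C)$, which I expect to be the main obstacle, the plan is a contradiction argument. Assuming $\mathcal{G}$ is $(f+1)$-strictly robust, suppose there exist an $f$-total/local $\mathcal{F}$ and disjoint nonempty $\mathcal{V}_1,\mathcal{V}_2$ with $\mathcal{X}_{\mathcal{V}_1}^{f+1}\subsetneq\mathcal{V}_1$, $\mathcal{X}_{\mathcal{V}_2}^{f+1}\subsetneq\mathcal{V}_2$, and $|\mathcal{X}_{\mathcal{V}_1}^{f+1}|+|\mathcal{X}_{\mathcal{V}_2}^{f+1}|\leq f$. Setting $\mathcal{F}^{\ast}=\mathcal{X}_{\mathcal{V}_1}^{f+1}\cup\mathcal{X}_{\mathcal{V}_2}^{f+1}$, which has cardinality at most $f$ and is therefore $f$-total/local, strict robustness with $\mathcal{F}^{\ast}$ makes $\mathcal{G}_{\mathcal{V}\setminus\mathcal{F}^{\ast}}$ $(f+1)$-robust. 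Applying this robustness to the nonempty pair $\mathcal{V}_1\setminus\mathcal{F}^{\ast}$, $\mathcal{V}_2\setminus\mathcal{F}^{\ast}$ should extract a node outside $\mathcal{F}^{\ast}$ possessing $\geq f+1$ independent paths in $\mathcal{G}_{\mathcal{V}\setminus\mathcal{F}^{\ast}}$ from outside the corresponding $\mathcal{V}_a$. The hard part is that these paths avoid $\mathcal{F}^{\ast}$ rather than $\mathcal{F}$, whereas membership in $\mathcal{X}_{\mathcal{V}_a}^{f+1}$ demands avoiding $\mathcal{F}$; bridging this mismatch will require combining the previous step with another application of strict robustness taken at $\mathcal{F}$ itself, using the latter to discard candidate paths that pass through $\mathcal{F}$ while the former supplies enough surviving paths, producing the required node in $\mathcal{X}_{\mathcal{V}_a}^{f+1}$ and contradicting its assumed maximality.

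For the two non-implications the plan is to exhibit explicit small directed graphs. For $(B)\nRightarrow(A)$, a sparse digraph built from short cycles with a few added chords can be arranged so that deletion of any $f$-total/local vertex set still leaves an $(f+1)$-robust remainder, while the in-connectivity at some vertex in $\mathcal{G}^{l}$ falls below $2f+1$ so that $(2f+1)$-robustness of $\mathcal{G}$ itself fails. For $(C)\nRightarrow(B)$, I would design a graph whose $(f+1,f+1)$-robust condition is met by accumulating the required $f+1$ good nodes across both sides $\mathcal{V}_1$ and $\mathcal{V}_2$, while a specific $f$-node cut isolates part of one side so that the residual subgraph fails even $(f+1,1)$-robustness; analogues of such graphs already appear in the one-hop robustness literature and adapt to the $l$-hop setting once Definitions~\ref{rs-robust} and~\ref{strict_robustness} are unpacked over $\mathcal{G}^{l}$.
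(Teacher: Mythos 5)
First, a point of reference: this paper does not prove Lemma~\ref{abc} at all --- it is quoted from \cite{yuan2022asynchronous} --- so there is no in-paper proof to compare your proposal against, and it has to be judged on its own. Your argument for $(A)\Rightarrow(B)$ is sound: the $2f+1$ independent paths supplied by Definition~\ref{rs-robust} already avoid $\mathcal{F}$ as intermediates, their sources are distinct, at most $f$ of those sources lie in $\mathcal{F}$ under either the $f$-total or the $f$-local hypothesis, and the surviving $f+1$ paths live entirely in $\mathcal{G}_{\mathcal{H}}$. That step is complete.

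The step $(B)\Rightarrow(C)$, however, has a genuine gap that your proposed bridge does not close. The mismatch you identify is real: membership in $\mathcal{X}_{\mathcal{V}_a}^{f+1}$ is defined relative to $\mathcal{F}$, while strict robustness at $\mathcal{F}^{\ast}=\mathcal{X}_{\mathcal{V}_1}^{f+1}\cup\mathcal{X}_{\mathcal{V}_2}^{f+1}$ only yields $f+1$ independent paths avoiding $\mathcal{F}^{\ast}$. You cannot repair this by ``discarding candidate paths that pass through $\mathcal{F}$'': since independent paths share only their destination, each node of $\mathcal{F}$ removes at most one path from the family, but $|\mathcal{F}|$ can equal $f$, so in the worst case only \emph{one} of the $f+1$ paths survives as an $\mathcal{F}$-avoiding path --- far short of the $f+1$ needed to place the node in $\mathcal{X}_{\mathcal{V}_a}^{f+1}$. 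The ``second application of strict robustness taken at $\mathcal{F}$ itself'' does produce a node that genuinely belongs to $\mathcal{X}_{\mathcal{V}_1}^{f+1}\cup\mathcal{X}_{\mathcal{V}_2}^{f+1}$ (its paths avoid $\mathcal{F}$ entirely), but that only establishes $|\mathcal{X}_{\mathcal{V}_1}^{f+1}|+|\mathcal{X}_{\mathcal{V}_2}^{f+1}|\geq 1$, i.e., $(f+1,1)$-robustness, not the required $s=f+1$; it also silently assumes $\mathcal{V}_a\setminus\mathcal{F}\neq\emptyset$, which can fail under the $f$-local model where $\mathcal{F}$ need not be small. Some iteration or a different choice of removed set is needed, and as written the argument does not supply it. Finally, the two non-implications are only promised, not delivered; note that the paper hands you a concrete witness for $(C)\nRightarrow(B)$ with $f=1$: by Lemma~\ref{lemmacycle} the cycle graph of Fig.~\ref{graph1}(a) is $(2,2)$-robust with $4$ hops, yet it is never $2$-strictly robust for any $l$ because an $(f+1)$-strictly robust graph must have minimum in-degree $2f+1=3$. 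A similarly explicit graph (with verification) is still owed for $(B)\nRightarrow(A)$.
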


Condition $(A)$ is a sufficient condition for the $f$-total/local malicious model with communication delays \cite{dibaji2018resilient,yuan2021resilient}. Condition $(B)$ is a necessary and sufficient condition for the $f$-total/local Byzantine model with/without delays \cite{yuan2022asynchronous}. Condition $(C)$ is a necessary and sufficient condition for the $f$-total malicious model without delays \cite{leblanc2013resilient,yuan2021resilient}.

\subsection{The Case of Unbounded Path Length} \label{sec_unbounded}

In this subsection, we discuss the relations between the graph conditions in this paper and the ones in the recent works for binary consensus \cite{khan2020exact,tseng2015fault} shown in Table~\ref{table1}. The authors of \cite{khan2020exact} studied EBC under the local broadcast model, which is equivalent to the $f$-total malicious model. Their algorithm is based on a noniterative flooding algorithm, where nodes must relay their values over the entire network. This model corresponds to the case of unbounded path length in our work, i.e. $l\geq l^*$, where $l^*$ is the longest cycle-free path length of the network.
In our previous work \cite{yuan2021resilient} studying real-valued consensus, we have proved that our graph condition (i.e., $(f + 1,f+1)$-robustness with $l$ hops) is equivalent to theirs when $l\geq l^*$. Besides, to achieve the same tolerance of malicious agents as the algorithm in \cite{khan2020exact}, our algorithm usually requires less than $l^*$-hop relaying for general graphs.

In our previous work \cite{yuan2022asynchronous} studying approximate Byzantine consensus, we have proved that a graph is $(f+1)$-strictly robust with $l$ hops if and only if it satisfies the graph condition in \cite{su2017reaching}, which studied synchronous Byzantine consensus with $l$-hop communication. In \cite{su2017reaching}, it has been reported that the graph condition there for synchronous Byzantine consensus is the same as the one in \cite{tseng2015fault} with $l\geq l^*$. Hence, we have established the equivalence between our condition and the condition in \cite{tseng2015fault} for the case of unbounded path length.

\subsection{Classes of Graphs Satisfying Robustness Conditions}\label{sec_construction}

We present several classes of graphs which meet the topological conditions in our results. We note that it is computationally complex to check robustness for a given graph since combinatorial processes are involved (see, e.g., \cite{leblanc2013resilient,usevitch2020determining}). With multi-hop communication, we are able to improve robustness of certain classes of graphs and provide analytic results. In what follows, we discuss graphs that meet robustness conditions with $l$ hops, which is about half or even much smaller than $l^*$. Hence, they require less relay hops than the flooding algorithms in \cite{wang2020asynchronous,khan2020exact} for the same level of robustness. Detailed comparisons with these works are presented in Sections~\ref{secsyn} and \ref{secbyzan}.

To begin with, we introduce a lemma for cycle graphs from \cite{yuan2021resilient}. A cycle graph $\mathcal{C}_n$ with $n$ nodes may be very fragile in terms of robustness because each node has only two neighbors. Thus, when $n>3$, resilient consensus is impossible by one-hop MSR algorithms even when only one of the neighbors becomes malicious. However, the following lemma shows that $\mathcal{C}_n$ can tolerate one malicious node if each node is capable to send data over $\lceil l^*/2 \rceil$ hops, where $l^*=n-1$. See also the result in Theorem~\ref{syn} to be presented.

\begin{figure}[t]
	\centering
	\subfigure[]{
		\includegraphics[width=1.2in]{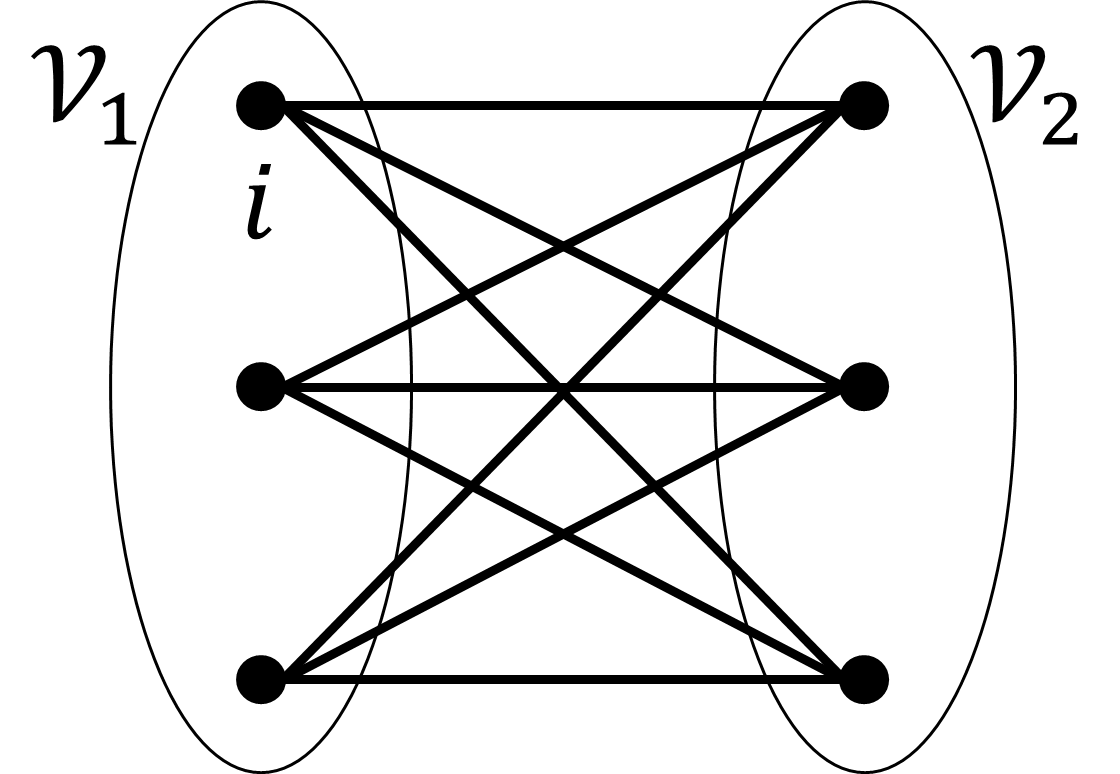}
	}
	\quad
	%\vspace{-4pt}
	\subfigure[]{
		\includegraphics[width=1.05in]{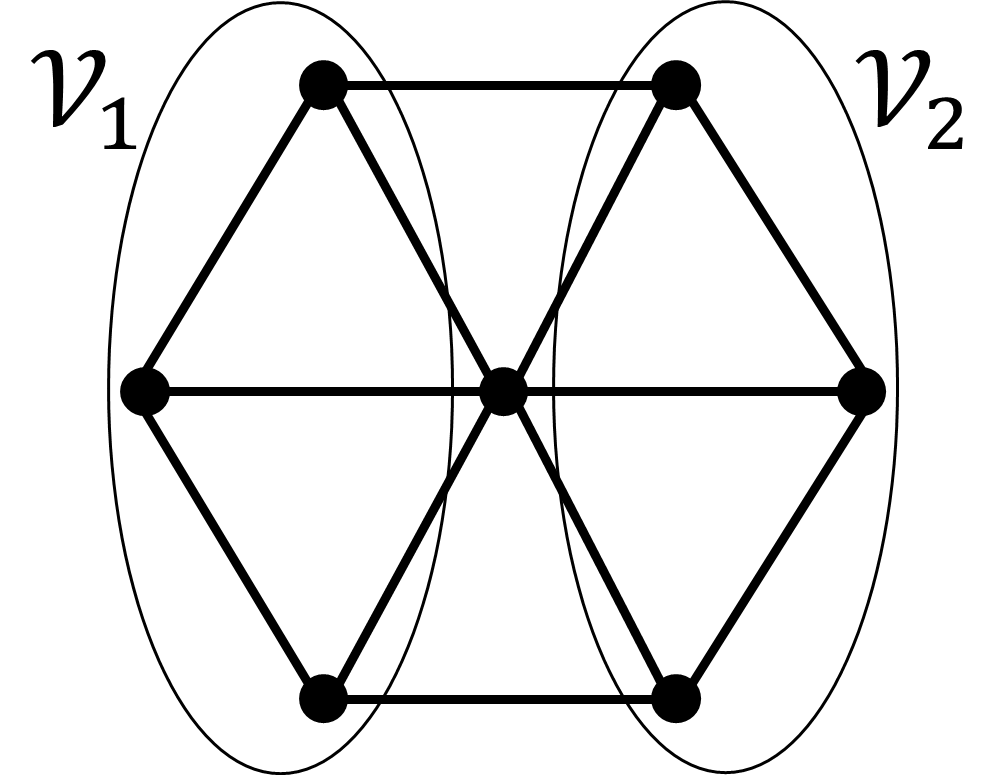}
	}
	\vspace{-4pt}
	\caption{Illustration for graphs satisfying our conditions. (a) Complete bipartite graph $\mathcal{K}_{3,3}$ having robustness with $l$ hops. (b) Wheel graph $\mathcal{W}_7$ having strict robustness with $l$ hops.}
	\label{example_graphs}
	\vspace*{-3.0mm}
\end{figure}

\begin{lemma}\label{lemmacycle}
	The cycle graph $\mathcal{C}_n$ with $n>2$ nodes is $(2,2)$-robust with $\lceil l^*/2 \rceil$ hops under the $1$-total model.
\end{lemma}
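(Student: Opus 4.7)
The plan is a direct structural analysis exploiting the two-directional topology of $\mathcal{C}_n$. Label the nodes $1,\dots,n$ cyclically and set $l=\lceil(n-1)/2\rceil$. Fix disjoint nonempty $\mathcal{V}_1,\mathcal{V}_2\subseteq\mathcal{V}$ and a $1$-total set $\mathcal{F}$ with unique element $f$ (if any). For $i\in\mathcal{V}_a$, let $d^+_a(i), d^-_a(i)$ denote the clockwise and counterclockwise cycle distances from $i$ to the nearest node of $\mathcal{V}\setminus\mathcal{V}_a$. Since the only two internally vertex-disjoint paths to $i$ in $\mathcal{C}_n$ are its clockwise and counterclockwise cycle arcs, $i\in\mathcal{X}^2_{\mathcal{V}_a}$ is equivalent to the existence of integers $d^\pm$ with $d^\pm_a(i)\le d^\pm\le l$, $d^++d^-\le n-1$, and $f$ not a strict interior node of either chosen arc.

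I would first dispose of the trivial configuration: any singleton arc of $\mathcal{V}_a$ (a node with both cycle-neighbors outside $\mathcal{V}_a$) automatically lies in $\mathcal{X}^2_{\mathcal{V}_a}$ through its two length-$1$ paths with empty interior. Next I would establish the \emph{midpoint lemma}: in any arc $A\subseteq\mathcal{V}_a$ of length $k\in[1,n-2]$, the central node satisfies $\max\{d^+_a,d^-_a\}=\lceil(k+1)/2\rceil\le l$ (using $k\le n-2\le 2l-1$ for both parities of $n$) and $d^++d^-=k+1\le n-1$, hence belongs to $\mathcal{X}^2_{\mathcal{V}_a}$ whenever $f$ does not strictly-interior-block either of its cycle-arcs.

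A short case-analysis on $\mathcal{V}_a$'s arc structure shows that $\mathcal{X}^2_{\mathcal{V}_a}=\emptyset$ can occur only when $\mathcal{V}_a$ is a single arc of length $k_a>l$ with $f\in\mathcal{V}_a$ lying outside the arc's central range $[k_a+1-l,\,l]$ (so that $f$ interior-blocks every other node of the arc while itself failing the distance bound). Otherwise, either an unaffected arc's midpoint, or $f$ itself when it is a central candidate, provides at least one node of $\mathcal{X}^2_{\mathcal{V}_a}$. Consequently, in all configurations where both $\mathcal{X}^2_{\mathcal{V}_1}$ and $\mathcal{X}^2_{\mathcal{V}_2}$ are nonempty, condition (3) of Definition~\ref{rs-robust} is satisfied.

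The remaining scenario is the asymmetric tight case $\mathcal{X}^2_{\mathcal{V}_a}=\emptyset$. Here $k_a>l$ forces $|\mathcal{V}\setminus\mathcal{V}_a|=n-k_a\le l$ for both parities of $n$ (for $n$ odd, $k_a\ge(n+1)/2$ gives $n-k_a\le l$; for $n$ even, $k_a\ge n/2+1$ gives $n-k_a\le l-1$). Hence every arc of $\mathcal{V}_{3-a}\subseteq\mathcal{V}\setminus\mathcal{V}_a$ has length $\le l$, so every node has both directional distances $\le l$; moreover $f\in\mathcal{V}_a$ sits outside $\mathcal{V}_{3-a}$ and the interiors of the minimum-distance paths for $\mathcal{V}_{3-a}$ remain inside $\mathcal{V}_{3-a}$, so $f$ cannot block them. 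Therefore $\mathcal{X}^2_{\mathcal{V}_{3-a}}=\mathcal{V}_{3-a}$, yielding condition (2). The main obstacle is exactly this tight case, since the midpoint construction collapses and one must switch to the complementary set $\mathcal{V}_{3-a}$; the key arithmetic that $k_a>l$ implies $n-k_a\le l$ is precisely what makes $l=\lceil(n-1)/2\rceil$ the smallest hop count for which the lemma holds.
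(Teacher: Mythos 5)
The paper does not actually prove Lemma~\ref{lemmacycle}; it imports it from \cite{yuan2021resilient}, so there is no in-paper argument to compare yours against. Judged on its own, your proof is correct and self-contained. The structure is sound: (i) in a cycle at most two independent paths can reach any node, one per direction, so membership in $\mathcal{X}^2_{\mathcal{V}_a}$ reduces to the two directional distances to the nearest sources outside $\mathcal{V}_a$, the disjointness bound $d^++d^-\le n-1$, and non-blocking by $f$; (ii) the midpoint of any maximal arc of at most $n-2$ nodes passes the distance test because $n-2\le 2l-1$; (iii) a single adversary can only block nodes of the arc it belongs to, which pins down the unique failure mode ($\mathcal{V}_a$ a single arc longer than $l$); and (iv) the complementary arithmetic $k_a>l\Rightarrow n-k_a\le l$ then forces $\mathcal{X}^2_{\mathcal{V}_{3-a}}=\mathcal{V}_{3-a}$. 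Two small imprecisions are worth tightening, though neither breaks the argument. First, your stated equivalence for $i\in\mathcal{X}^2_{\mathcal{V}_a}$ quantifies over all integers $d^\pm\in[d^\pm_a(i),l]$, yet the node at distance $d^\pm$ need not lie outside $\mathcal{V}_a$ and hence need not be an admissible source; since a longer arc can never avoid $f$ when the minimal one fails (the minimal arc's interior is contained in every longer arc's interior), you should state the criterion with the minimal arcs only, which is all your proof ever uses anyway. Second, when $\mathcal{V}_a=\mathcal{V}\setminus\{j\}$ the set $\mathcal{X}^2_{\mathcal{V}_a}$ is empty even if $f\notin\mathcal{V}_a$ (the would-be central node fails the disjointness bound $d^++d^-\le n-1$), so the clause that ``$f\in\mathcal{V}_a$ sits outside $\mathcal{V}_{3-a}$'' does not literally apply in that subcase; the conclusion survives because the singleton $\{j\}$ has two length-one paths with empty interiors, which your opening ``trivial configuration'' observation already covers, but the case split should say so explicitly.
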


We next consider the class of complete bipartite graphs \cite{west2001introduction}. Consider the undirected graph $\mathcal{G} = (\mathcal{V},\mathcal{E})$ with two nonempty and disjoint node subsets $\mathcal{V}_1$ and $\mathcal{V}_2$ that partition $\mathcal{V}$ as $ \mathcal{V} = \mathcal{V}_1 \cup \mathcal{V}_2$, where $|\mathcal{V}_1|=n_1$ and $|\mathcal{V}_2|=n_2$. A graph is a bipartite graph if each node in $\mathcal{V}_1$ has edges to only nodes in $\mathcal{V}_2$ and vice versa. Moreover, it is a complete bipartite graph denoted by $\mathcal{K}_{n_1,n_2}$ if each node in $\mathcal{V}_1$ has edges to all nodes in $\mathcal{V}_2$ and vice versa (see Fig.~\ref{example_graphs}(a)). For $\mathcal{K}_{n_1,n_2}$, it holds $l^*=n-1$ as well. Let $d=\min\{n_1,n_2\}$.
 
In a complete bipartite graph, with two-hop communication, each node can reach all other nodes in the network, which is a strong feature. For such a graph, we have the following result.
 
\begin{lemma}\label{lemma2f}
	The complete bipartite graph $\mathcal{K}_{n_1,n_2}$ is $(\lfloor \frac{d }{2} \rfloor +1 , \lfloor \frac{d }{2} \rfloor +1)$-robust with $l\geq 2$ hops under the $\lfloor \frac{d }{2} \rfloor$-total model.
\end{lemma}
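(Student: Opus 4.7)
The plan is to explicitly lower-bound the number of independent paths to each node in $\mathcal{V}_a$ and then argue by a short case split on the parity of $d$. To avoid notation clash with the robustness definition, write $\mathcal{V} = U_1 \sqcup U_2$ for the bipartition sides of $\mathcal{K}_{n_1,n_2}$, with $|U_j| = n_j$ and $\bar{\jmath} = 3-j$. Fix an $f$-total set $\mathcal{F}$ (so $|\mathcal{F}| \leq f = \lfloor d/2\rfloor$) and nonempty disjoint $\mathcal{V}_1,\mathcal{V}_2\subset \mathcal{V}$; set $r = s = f+1$.

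First I would establish the Menger-type inequality, valid for any $i \in \mathcal{V}_a \cap U_j$:
\begin{equation*}
|\mathcal{I}^{\mathcal{F}}_{i,\mathcal{V}_a}| \;\geq\; \min\bigl\{\, n - |\mathcal{V}_a|,\; n_{\bar{\jmath}} - |\mathcal{V}_a \cap U_{\bar{\jmath}} \cap \mathcal{F}|\,\bigr\},
\end{equation*}
using only paths of length at most $2$. Length-$1$ paths originate at $U_{\bar{\jmath}}\setminus \mathcal{V}_a$; length-$2$ paths originate at $U_j \setminus \mathcal{V}_a$ and route through an intermediate in $U_{\bar{\jmath}}\setminus\mathcal{F}$. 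Independence forces the $U_{\bar{\jmath}}$-nodes appearing as either length-$1$ source or length-$2$ intermediate to be distinct, so the total number of such paths is capped both by the number of eligible sources ($n-|\mathcal{V}_a|$) and by the number of usable $U_{\bar{\jmath}}$-nodes ($n_{\bar{\jmath}}-|\mathcal{V}_a \cap U_{\bar{\jmath}} \cap \mathcal{F}|$); a direct assignment argument shows both caps can be attained simultaneously.

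Next, since $n = n_1+n_2 \geq 2d \geq 2r$, WLOG $|\mathcal{V}_2|\leq|\mathcal{V}_1|$, giving $n-|\mathcal{V}_2|\geq n/2 \geq r$, so the first term of the bound is $\geq r$ for every $i\in \mathcal{V}_2$. When $d$ is odd, $n_{\bar{\jmath}}-|\mathcal{V}_2 \cap U_{\bar{\jmath}} \cap \mathcal{F}|\geq d-f=\lceil d/2\rceil=r$ as well, so every $i\in\mathcal{V}_2$ attains $\geq r$ independent paths and condition~(2) holds. When $d$ is even, $d-f=r-1$, and the second term can fall below $r$ only if $|\mathcal{V}_2 \cap U_{\bar{\jmath}} \cap \mathcal{F}|=f$ with $n_{\bar{\jmath}}=d$ for some $\bar{\jmath}$, forcing $\mathcal{F}\subseteq \mathcal{V}_2 \cap U_{\bar{\jmath}}$ and hence $\mathcal{V}_1 \cap \mathcal{F}=\emptyset$. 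If moreover $\mathcal{V}_2 \subseteq U_{\bar{\jmath}}$, then $\mathcal{V}_2$ has no node on the affected side and condition~(2) still holds; otherwise $|\mathcal{V}_2|\geq f+1 = r$, so $|\mathcal{V}_1|\leq n-r$, and combined with $\mathcal{V}_1 \cap \mathcal{F}=\emptyset$ the bound yields $\geq r$ for every $i\in\mathcal{V}_1$, i.e.\ condition~(1) holds.

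The delicate step is the even-$d$ sub-case, where I must carefully coordinate the WLOG choice with the side of the bipartition into which $\mathcal{F}$ concentrates: depending on this, the attained condition flips between (1) and (2). The Menger-type counting is conceptually routine but needs careful bookkeeping to verify that the stated bound is actually saturated by length-$\leq 2$ paths with simultaneously disjoint sources and intermediates.
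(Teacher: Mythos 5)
Your proof is correct, but it takes a genuinely different route from the paper's. The paper's proof is a reduction: it first notes that the robustness of $\mathcal{K}_{n_1,n_2}$ saturates at $l=2$ hops (two-hop relays already reach the whole graph), and then invokes Proposition~7.1 of \cite{yuan2021resilient}, which identifies $(f+1,f+1)$-robustness with unbounded relay range with the condition of \cite{khan2020exact} (minimum in-degree at least $2f$ and $\lfloor 3f/2\rfloor$-connectivity); the lemma then follows since $\mathcal{K}_{n_1,n_2}$ has minimum degree $d\geq 2\lfloor d/2\rfloor$ and is $d$-connected. You instead verify Definition~\ref{rs-robust} directly by exhibiting independent paths of length at most $2$ and splitting on the parity of $d$. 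The two steps you flag as delicate both go through: the Menger-type bound follows by partitioning $U_{\bar{\jmath}}$ into the four cells determined by membership in $\mathcal{V}_a$ and $\mathcal{F}$ (only nodes in $\mathcal{V}_a\cap U_{\bar{\jmath}}\cap\mathcal{F}$ can serve neither as a one-hop source nor as a two-hop intermediate, and a greedy assignment then attains $\min\{n-|\mathcal{V}_a|,\ n_{\bar{\jmath}}-|\mathcal{V}_a\cap U_{\bar{\jmath}}\cap\mathcal{F}|\}$), and in the even-$d$ sub-case the escape to condition~(1) is forced exactly when $\mathcal{F}$ fills $\mathcal{V}_2\cap U_{\bar{\jmath}}$ with $n_{\bar{\jmath}}=d$, which gives $\mathcal{F}\subseteq\mathcal{V}_2$, hence $\mathcal{V}_1\cap\mathcal{F}=\emptyset$ and $|\mathcal{V}_2|\geq f+1$ as you use. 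What each approach buys: yours is self-contained and elementary, needs neither the saturation claim nor the external equivalence result, works natively at $l=2$, and pinpoints the extremal configurations where only $f$ independent paths exist; the paper's is much shorter but imports two nontrivial prior results.
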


Its proof can be found in the Appendix~A. By Lemma~\ref{lemma2f}, we could generate desirable graphs for our algorithm, which can tolerate any number $f\leq \lfloor \frac{d }{2} \rfloor$ of malicious agents by introducing just two-hop communication. Relevant results are given in Section~\ref{secsyn}.

Furthermore, we introduce a class of graphs that meet the condition of strict robustness with $l$ hops.
A wheel graph $\mathcal{W}_n$ consists of a cycle subgraph and a centering node connected with all other nodes as shown in Fig.~\ref{example_graphs}(b), see also \cite{west2001introduction}.
In such a graph too, two-hop relays are enough for any two nodes to communicate whereas $l^*=n-1$.
The proof of Lemma~\ref{lemmastarcyle} can be found in the Appendix~B.

\begin{lemma}\label{lemmastarcyle}
	The wheel graph $\mathcal{W}_n$ with $n>3$ nodes is $2$-strictly robust with $\lfloor l^*/4 \rfloor+1$ hops under the $1$-total model.
\end{lemma}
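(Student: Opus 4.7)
\emph{Proof proposal.} Denote the hub of $\mathcal{W}_n$ by $c$ and the outer cycle nodes by $v_1,\ldots,v_{n-1}$. Under the $1$-total model any removed set $\mathcal{F}$ has $|\mathcal{F}|\le 1$, so only three situations arise: $\mathcal{F}=\emptyset$, $\mathcal{F}=\{c\}$, or $\mathcal{F}=\{v_j\}$ for some cycle index $j$. The induced subgraph $\mathcal{G}_{\mathcal{H}}$ is then either $\mathcal{W}_n$ itself, the cycle $\mathcal{C}_{n-1}$, or a ``fan'' (a linear path on the $n-2$ remaining cycle nodes, with the hub $c$ adjacent to every path node). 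I will verify that each of these subgraphs is $2$-robust with $l:=\lfloor l^*/4\rfloor+1$ hops by working directly from Definition~\ref{rs-robust} on an arbitrary disjoint pair of nonempty $\mathcal{V}_1,\mathcal{V}_2$.

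The tight case is $\mathcal{F}=\{c\}$, where $\mathcal{G}_{\mathcal{H}}=\mathcal{C}_{n-1}$. Decompose each $\mathcal{V}_a$ into its maximal arcs on the cycle. A node at position $p$ in an arc of length $k$ has its two nearest outside neighbors at cycle-distances $p$ and $k-p+1$; since the only pair of internally disjoint paths to a node in a cycle are the clockwise and counter-clockwise ones, this node lies in $\mathcal{X}_{\mathcal{V}_a}^{2}$ iff both $p\le l$ and $k-p+1\le l$. Consequently an arc contributes no node to $\mathcal{X}_{\mathcal{V}_a}^{2}$ precisely when $k\ge 2l$. For condition~$(3)$ of Definition~\ref{rs-robust} to fail, every arc of both $\mathcal{V}_1$ and $\mathcal{V}_2$ must therefore have length at least $2l$, and since each set is nonempty and thus contributes at least one arc, this forces $|\mathcal{V}_1|+|\mathcal{V}_2|\ge 4l$. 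But $l=\lfloor(n-1)/4\rfloor+1$ gives $4l > n-1 \ge |\mathcal{V}_1|+|\mathcal{V}_2|$, a contradiction. Hence condition~$(3)$ always holds and $\mathcal{C}_{n-1}$ is $2$-robust with $l$ hops.

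In the remaining two cases the hub $c$ lies in $\mathcal{G}_{\mathcal{H}}$ and is adjacent to every other node, and I claim $2$-robustness already at one hop. Suppose first $c\in \mathcal{V}_1$. If at least two cycle nodes lie outside $\mathcal{V}_1$, they supply two independent $1$-hop paths to $c$, so $c\in \mathcal{X}_{\mathcal{V}_1}^2$ and condition~$(3)$ holds. Otherwise $\mathcal{V}_2$ must be a single cycle node $v_k$, and each of $v_k$'s two or three neighbors ($c$ plus its cycle neighbors remaining in $\mathcal{G}_{\mathcal{H}}$) lies outside $\mathcal{V}_2$; this yields two independent $1$-hop paths to $v_k$, so $\mathcal{X}_{\mathcal{V}_2}^2=\mathcal{V}_2$ and condition~$(2)$ holds. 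The case $c\in \mathcal{V}_2$ is symmetric. Finally, if $c\notin \mathcal{V}_1\cup \mathcal{V}_2$, then $\mathcal{V}_a$ cannot cover every remaining cycle node (else the other set is empty), so $\mathcal{V}_a$ has at least one arc with a boundary node whose nearest cycle-neighbor in $\mathcal{G}_{\mathcal{H}}$ is outside $\mathcal{V}_a$; together with the direct edge from $c$ this gives two independent $1$-hop paths from outside, placing the boundary node in $\mathcal{X}_{\mathcal{V}_a}^2$ and validating condition~$(3)$.

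The main challenge is the cycle subcase: one must see that condition~$(3)$ of robustness fails only when \emph{every} arc of \emph{both} sets is long, not merely some, and that is what converts the loose-looking hop count into the tight bound $4l > n-1$. The hub-present subcases look messier at first but collapse immediately once the hub's universal $1$-hop reach is used to supply the second independent path.
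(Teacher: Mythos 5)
Your proof is correct and follows the same top-level case split as the paper's Appendix~B: either a rim node is removed, leaving a ``fan'' whose hub makes the subgraph $2$-robust already at one hop, or the hub is removed, leaving the cycle $\mathcal{C}_{n-1}$, which is the only place the hop count $\lfloor l^*/4\rfloor+1$ is actually needed. Where you genuinely diverge is in how the cycle case is closed. The paper exhibits only the balanced bipartition of Fig.~\ref{example_graphs}(b), observes that the midpoint of an arc there needs paths of $\lfloor l^*/4\rfloor+1$ hops, and asserts without further argument that every other choice of $\mathcal{V}_1,\mathcal{V}_2$ is easier; your arc decomposition replaces that assertion with a complete proof: an arc of length $k$ contributes a node with two short enough independent paths exactly when $k<2l$, so failure of condition~$(3)$ forces every arc of both sets to have length at least $2l$, whence $|\mathcal{V}_1|+|\mathcal{V}_2|\ge 4l>n-1$, a contradiction. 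This is both tighter and more general (it effectively re-derives the cycle-robustness bound of Lemma~\ref{lemmacycle} as a byproduct), and you also cover $\mathcal{F}=\emptyset$, which the paper's proof silently omits.

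One harmless imprecision to flag: your ``iff'' for arc membership tacitly assumes the two nearest outside neighbours of a node are \emph{distinct} sources. By the paper's definition of independent paths (at least $r$ distinct source nodes, only the destination in common), an arc covering all but one node of $\mathcal{C}_{n-1}$ has both candidate paths originating at the same source, so its middle node is \emph{not} in $\mathcal{X}^{2}$ even when both distances are at most $l$. This can only interfere with your contrapositive when $n-2<2l$, i.e.\ $n=5$, and there the complementary singleton arc still supplies the node required by condition~$(3)$, so your conclusion stands; but the distinct-source requirement deserves an explicit sentence.
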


It is interesting that compared to cycle graphs, wheel graphs have a more centralized structure with the centering node, and their strict robustness can be enhanced with less number of hops. However, in both cases, the maximum tolerable number of adversary agents is one. This is indicated by the $1$-total model in Lemmas~\ref{lemmacycle} and \ref{lemmastarcyle}.

\section{Synchronous Network}\label{secsyn}

In this section, we analyze the performance of the synchronous QMW-MSR algorithm under the malicious model. All nodes update values synchronously, i.e., $\mathcal{U}[k] = \mathcal{V}, \forall k$.

Reorder the agents so that the normal agents take indices $1,\dots,n_N$ and the adversary agents take $n_N +1,\dots,n$. Then the state vector and control input vector can be written as 
\begin{equation} 
	x[k]=\begin{bmatrix} x^N[k]  \\ x^A[k] \end{bmatrix}, \medspace u[k]=\begin{bmatrix} u^N[k]  \\ u^A[k] \end{bmatrix}. 
\end{equation}
Regarding the control inputs $u^N[k]$ and $u^A[k]$, the normal agents follow \eqref{msrupdate} while the adversary agents may not. Hence, they can be expressed as
\begin{equation}
	\begin{array}{lll} 
		u^N[k] =Q\big(-L^N[k]x[k]\big),\\
		u^A[k] : \textup{arbitrary,}
	\end{array}
\end{equation}
where $L^N[k]\in \mathbb{R}^{n_N\times n}$ is the matrix formed by the first $n_N$ rows of $L[k]$ associated with normal agents. The row sums of this matrix $L^N[k]$ are zero as in $L[k]$. Thus, the system is expressed as
\begin{equation} \label{system1}
	x[k+1]=Q\bigg(\left(  I_n -   \begin{bmatrix} L^N[k] \\ 0 \end{bmatrix} \right)  x[k]\bigg)  +   \begin{bmatrix} 0  \\ I_{n_A} \end{bmatrix}u^A[k].
\end{equation}

In the following theorem, we characterize a necessary and sufficient graph condition for the malicious model. For the agents using Algorithm~1, the safety interval is given by 
\begin{equation}  \label{safety1}
	\mathcal{S}=\big[ \min x^N[0], \max x^N[0] \big].        	
\end{equation}

\begin{theorem}\label{syn}
	Consider a directed network $\mathcal{G} = (\mathcal{V},\mathcal{E})$ with $l$-hop communication, where each normal node updates its value according to the synchronous QMW-MSR algorithm with parameter
	$f$. Under the $f$-total malicious model, resilient quantized
	consensus is achieved almost surely with safety interval \eqref{safety1} if and only if $\mathcal{G}$ is $(f + 1, f + 1)$-robust with $l$ hops.
\end{theorem}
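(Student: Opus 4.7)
The plan is to prove both directions. Sufficiency rests on three ingredients: safety of the interval $\mathcal{S}$ in~\eqref{safety1}, monotonicity of the normal extrema, and almost-sure finite-time agreement enabled by the randomized quantizer. Necessity is by contrapositive, constructing initial conditions and adversary behavior that freeze two disjoint normal subsets at distinct values whenever the robustness property fails.

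For safety, I would fix a normal node $i$ and show that every message surviving trimming has value in $[m[k], M[k]] := [\min x^N[k], \max x^N[k]]$. Under the $f$-total model, the adversary set $\mathcal{A}$ is itself a message cover of any collection of adversary-sourced messages, so such a collection has minimum message cover of cardinality at most $f$. Since the greedy construction of $\overline{\mathcal{R}}_i[k]$ absorbs messages from the top until including the next one would push the MMC past $f$, no message with value strictly above $M[k]$ can survive trimming on the high side, and a symmetric argument applies on the low side. Hence the quantizer input is a convex combination of integers in $[m[k], M[k]]$, and because this interval has integer endpoints the output of~\eqref{quantizer} lies in the same interval, giving safety together with the monotone behavior of $M[k]$ and $m[k]$.

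For agreement, whenever $M[k] > m[k]$ I would apply $(f+1,f+1)$-robustness with $l$ hops to the disjoint nonempty sets $\mathcal{V}_1 = \{j \in \mathcal{N}: x_j[k] = M[k]\}$ and $\mathcal{V}_2 = \{j \in \mathcal{N}: x_j[k] = m[k]\}$, with $\mathcal{F} = \mathcal{A}$. The robustness guarantees a normal node $i$ in $\mathcal{V}_1$ or $\mathcal{V}_2$ with at least $f+1$ independent paths of length at most $l$ from outside its own block that avoid $\mathcal{A}$; since trimming removes at most a set of MMC $f$, at least one such safe path survives, forcing the quantizer input $y_i[k]$ strictly into the open interval $(m[k], M[k])$. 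The randomization in~\eqref{quantizer} then yields a uniform positive lower bound $p_0$ on the probability that $Q(y_i[k])$ carries $i$ strictly away from its extreme: when $y_i[k]$ is non-integer, the finite state range in $\mathcal{S}$ together with the uniform weights constrains $y_i[k]$ to a finite rational set on which $\min\{p(\cdot),1-p(\cdot)\}$ is bounded below, while when $y_i[k]$ is already an integer the rounding is deterministic and favorable. After a bounded number of such events, either the cardinality of $\mathcal{V}_1$ or that of $\mathcal{V}_2$ is exhausted, so the nonnegative integer gap $M[k]-m[k]$ strictly drops, and a Borel-Cantelli argument on this bounded integer gap completes almost-sure finite-time agreement.

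For necessity, take a witness $(\mathcal{F}, \mathcal{V}_1, \mathcal{V}_2)$ with $|\mathcal{F}|\leq f$ violating all three clauses of Definition~\ref{rs-robust}. Place adversaries at $\mathcal{F}$, set normal nodes in $\mathcal{V}_1$ to an integer $a$ and those in $\mathcal{V}_2$ to $b>a$, with the remaining normals anywhere in $[a,b]\cap\mathbb{Z}$. Each normal $i \in \mathcal{V}_1 \setminus \mathcal{X}_{\mathcal{V}_1}^{f+1}$ has fewer than $f+1$ independent paths from outside $\mathcal{V}_1$ avoiding $\mathcal{F}$, so the messages reaching $i$ with value exceeding $a$ have MMC at most $f$; once adversaries in $\mathcal{F}$ broadcast value $a$, such messages are fully trimmed and $x_i[k]\equiv a$. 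Symmetrically, nodes in $\mathcal{V}_2 \setminus \mathcal{X}_{\mathcal{V}_2}^{f+1}$ are frozen at $b$, and since clause~(3) fails both frozen sets are nonempty, so consensus cannot occur. The main obstacle across the whole argument is the agreement step: unlike real-valued MSR, where a fixed contraction of the weighted mean produces convergence without any randomness, the integer dynamics can stall unless the uniform lower bound $p_0$ is established with care by enumerating the finitely many possible quantizer inputs $y_i[k]$ under safety and separating the integer and non-integer cases.
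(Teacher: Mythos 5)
Your overall architecture matches the paper's proof almost exactly: safety via the trimming/convex-combination argument, agreement via $(f+1,f+1)$-robustness producing a node with $f+1$ independent paths of which at least one survives trimming, the randomized quantizer giving probability at least $\alpha$ of a strict integer decrease, a counting argument on the extreme sets, and a freezing construction for necessity. Two concrete points, however, need attention. First, in the agreement step your exhaustion argument ("after a bounded number of such events, $\mathcal{V}_1$ or $\mathcal{V}_2$ is exhausted") requires not only that some node leaves its extreme set with positive probability, but also that, in the same step and with positive probability, \emph{no} normal node newly enters either extreme set; monotonicity of $M[k]$ alone does not rule out a node at $M[k]-1$ being pulled up to the (unchanged) maximum. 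The paper proves this separately (its condition \eqref{notgetin}) by noting that a node outside $\mathcal{Z}_1[k]$ retains its own value $\leq \overline{x}^*-1$ with weight at least $\alpha$, so the same quantizer bound \eqref{getout} applies to it; your outline omits this half of the simultaneity and the counting does not close without it.

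Second, your necessity construction places the adversaries at the witness set $\mathcal{F}$ and has them broadcast the value $a$. Under the malicious model a node must send the \emph{same} value to all neighbors, so these adversaries present messages of value $a<b$ to the nodes of $\mathcal{V}_2$; those messages (originating from or relayed through $\mathcal{F}$) together with the low-valued $\mathcal{F}$-avoiding paths can have a minimum message cover of size up to $2f$, in which case the trimming in step 2 does not remove all of them and $\mathcal{V}_2$ is not guaranteed to stay frozen at $b$. The paper avoids this asymmetry by making the adversaries the nodes of $\mathcal{X}_{\mathcal{V}_1}^{f+1}\cup\mathcal{X}_{\mathcal{V}_2}^{f+1}$ (which has cardinality at most $f$ by the failure of condition 3), so the adversaries sit \emph{inside} $\mathcal{V}_1$ and $\mathcal{V}_2$ and natively hold the values $a$ and $b$, and every remaining normal node in each set sees all foreign values covered by a set of size at most $f$. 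You should adopt that placement (or otherwise argue the $\mathcal{V}_2$ side symmetrically).
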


To establish quantized consensus in this probabilistic setting, we need the following lemma, which is sufficient for guaranteeing resilient quantized consensus almost surely \cite{dibaji2018resilient}.

\begin{lemma}\label{synlemma}
	Consider a directed network $\mathcal{G} = (\mathcal{V},\mathcal{E})$ using the QMW-MSR algorithm. Suppose that the following three conditions are satisfied $\forall i \in\mathcal{N}$:
	\begin{enumerate}
		\item[C1)] There exists a bounded set $\mathcal{S}$ determined by the initial states of the normal nodes such that
		$x_i[k] \in \mathcal{S}, \forall i \in \mathcal{N}, k \in \mathbb{Z}_+$. 
		
		\item[C2)] For each state $x[k]=x_0$ at time $k$,
		there exist a finite time $k_b$ such that Prob$\{x^N [k+k_b ]\in  \mathcal{C} \thinspace|\thinspace x[k]=x_0\} >0 $.
		
		\item[C3)] If $x^N[k]\in \mathcal{C}$, then $x^N[k']\in \mathcal{C}, \forall k'>k$.
	\end{enumerate}
	Then, the network reaches quantized consensus almost surely.
\end{lemma}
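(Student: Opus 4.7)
The plan is a geometric-tail argument that converts the pointwise positive hitting probability in C2 into a uniform lower bound, after which almost-sure consensus follows immediately. The integer-valued safety set will make the reduction to a finite, Markov-chain-like analysis clean.

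First I would use C1 together with the integer-valued updates to argue that the reachable normal state space is finite. Since the quantizer $Q$ always produces integer outputs and C1 confines each coordinate to the bounded set $\mathcal{S}$, the normal state vector $x^N[k]$ takes values in the finite set $\mathcal{X}_N := (\mathcal{S} \cap \mathbb{Z})^{n_N}$ for every $k \ge 0$. This finiteness is the structural fact that makes the subsequent worst-case minimization attainable.

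Next I would uniformize C2 over this finite space. For each $y \in \mathcal{X}_N$, C2 yields a finite $k_b(y)$ and a strictly positive probability $p(y) > 0$ that the normal agents are in $\mathcal{C}$ at time $k+k_b(y)$ given the current normal state is $y$. Setting $K := \max_{y \in \mathcal{X}_N} k_b(y)$ and $p^* := \min_{y \in \mathcal{X}_N} p(y) > 0$, and invoking C3 to extend any shorter hitting time out to the common horizon $K$, I obtain the time-uniform bound
\[
\text{Prob}\{x^N[k+K] \in \mathcal{C} \mid x^N[k]\} \ge p^*, \quad \forall k \ge 0.
\]
Iterating along the subsequence $0, K, 2K, \dots$ gives $\text{Prob}\{x^N[nK] \notin \mathcal{C}\} \le (1-p^*)^n$, which tends to zero as $n \to \infty$. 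Hence $\text{Prob}\{\exists\, k_a < \infty : x^N[k_a] \in \mathcal{C}\} = 1$, and by C3 the normal agents remain in $\mathcal{C}$ for all subsequent times. This is exactly the desired almost-sure resilient quantized consensus.

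The main obstacle I anticipate is the careful treatment of adversary nondeterminism inside the conditional probabilities. The natural reading of C2 is that the positive lower bound holds uniformly over all admissible adversary strategies (i.e., as a worst-case probability); under this reading, the minimization over the finite set $\mathcal{X}_N$ is routine. A small but necessary subtlety is the use of C3 to stretch every hitting time shorter than $K$ out to exactly $K$, so that a single common horizon can be used when multiplying block-level bounds in the iteration; without C3 the block events could not be simultaneously lower-bounded by a common $p^*$, and the geometric decay would break down.
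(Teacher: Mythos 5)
Your proposal is correct and follows essentially the same route as the paper, which only sketches this lemma (attributing it to the cited reference on resilient randomized quantized consensus): positive probability of hitting the consensus set $\mathcal{C}$ from any reachable state, combined with C3 making $\mathcal{C}$ absorbing, yields almost-sure convergence. Your uniformization of $k_b$ and of the hitting probability over the finite integer state space $(\mathcal{S}\cap\mathbb{Z})^{n_N}$, and the resulting geometric tail bound, are exactly the details the paper's informal remark glosses over, and your caveat about reading C2 as a worst-case bound over adversary strategies is the right one.
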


If (C1)--(C3) hold $\forall i \in\mathcal{N}$, then the scenarios for reaching quantized consensus occur infinitely often with high probability. This is because the probability for such an event to occur is positive by (C2). Then, once normal agents reach consensus, such consensus is preserved indefinitely by (C3).

\textit{Proof of Theorem \ref{syn}:}
(Necessity)
If $\mathcal{G}$ is not $(f +1, f +1)$-robust with $l$ hops, then
there are nonempty, disjoint subsets $\mathcal{V}_1, \mathcal{V}_2\subset \mathcal{V}$ such that the following conditions hold:
\begin{enumerate}
	\item $|\mathcal{X}_{\mathcal{V}_1}^{f+1}|<|\mathcal{V}_1|$;
	\item $|\mathcal{X}_{\mathcal{V}_2}^{f+1}|<|\mathcal{V}_2|$;
	\item  $| \mathcal{X}_{\mathcal{V}_1}^{f+1} | +\left| \mathcal{X}_{\mathcal{V}_2}^{f+1}\right| \leq f$.
\end{enumerate}

Suppose that
$x_i[0]=a, \medspace \forall i\in \mathcal{V}_1$, and $x_j[0]=b, \medspace \forall j\in \mathcal{V}_2$ with $a < b$, and $x_q[0]=\lfloor (a+b)/2\rfloor, \medspace \forall q\in \mathcal{V}\setminus (\mathcal{V}_1 \cup \mathcal{V}_2)$.
Under condition 3) above, suppose that all nodes in $\mathcal{X}_{\mathcal{V}_1}^{f+1}$ and $\mathcal{X}_{\mathcal{V}_2}^{f+1}$ are malicious and take constant values.
Then, there is still at least one normal node in both $\mathcal{V}_1$ and $\mathcal{V}_2$ due to conditions 1) and 2) above. Then these normal nodes in $\mathcal{V}_1$ and $\mathcal{V}_2$ remove all the values originating from the nodes outside their respective sets since the MMC of these values has a cardinality equal to $f$ or less. By Algorithm~1, such normal nodes will keep their values and consensus cannot be achieved.

(Sufficiency) We show that the conditions (C1)--(C3) in Lemma \ref{synlemma} hold. To prove (C1), define the maximum and minimum values of the normal nodes at
time $k$ as
\begin{align*}
	\overline{x}[k]  =\max x^N[k],  \medspace\medspace\medspace
	\underline{x}[k]  =\min x^N[k]. 
\end{align*}
Observe that $\forall i\in \mathcal{N}$, the values used in step 3 of Algorithm~1 always lie within the interval $\big[ \underline{x}[k], \overline{x}[k] \big]$. 
Moreover, the update rule in \eqref{system1} is a quantized convex combination of the values in $\big[ \underline{x}[k], \overline{x}[k] \big]$. Therefore, $x_i[k+1]\in \big[ \underline{x}[k], \overline{x}[k] \big], \forall i\in \mathcal{N}$, and by induction, we have $x_i[k]\in \mathcal{S},\forall i\in \mathcal{N}, k\in \mathbb{Z}_+$.     

Then, we prove (C2). 
From above, $\overline{x}[k]$ and $\underline{x}[k]$ are monotone and bounded sequences, and thus there is a finite time $k_c$ such
that they both reach their final values with probability 1.
Denote the final values of $\overline{x}[k]$ and $\underline{x}[k]$ by $\overline{x}^*$ and $\underline{x}^*$,
respectively.
We will prove by contradiction to show that $\overline{x}^*=\underline{x}^*$, thus consensus is reached.
Suppose that $\overline{x}^*>\underline{x}^*$. When $k\geq k_c$, we can define the following sets
\begin{equation*}
	\begin{aligned}
		\mathcal{Z}_1[k]&=\{i\in \mathcal{V}: x_i[k]\geq \overline{x}^*\},\\
		\mathcal{Z}_2[k]&=\{i\in \mathcal{V}: x_i[k]\leq \underline{x}^*\}.
	\end{aligned}
\end{equation*}
In the following, we will show that with positive probability,
\begin{align}\label{shrinking}
	|\big(\mathcal{Z}_1[k] \cup  \mathcal{Z}_2[k] \big)\cap \mathcal{N}|  >  |\big(\mathcal{Z}_1[k+1]  \cup  \mathcal{Z}_2[k+1] \big)\cap \mathcal{N}|.
\end{align}
This will lead us to the desired contradiction.

We first show that with positive probability, 
\begin{align}\label{nodegetout}
	&\{i \in \big(\mathcal{Z}_1[k] \cup  \mathcal{Z}_2[k] \big)\cap \mathcal{N}: \nonumber \\
	&\medspace\medspace\medspace\medspace\medspace\medspace\medspace\medspace
	i \notin \big(\mathcal{Z}_1[k+1] \cup  \mathcal{Z}_2[k+1] \big)\cap \mathcal{N} \} \neq \emptyset.
\end{align}
Clearly, $\mathcal{Z}_1[k]$ and $\mathcal{Z}_2[k]$ are nonempty and disjoint by assumption.
Since the network is $(f + 1, f + 1)$-robust with $l$ hops, then for every pair of nonempty disjoint subsets $\mathcal{Z}_1[k],\mathcal{Z}_2[k]$, at least one of the following conditions holds:
\begin{enumerate}
	\item $\mathcal{X}_{\mathcal{Z}_1[k]}^{f+1}=\mathcal{Z}_1[k]$;
	\item $\mathcal{X}_{\mathcal{Z}_2[k]}^{f+1}=\mathcal{Z}_2[k]$;
	\item  $| \mathcal{X}_{\mathcal{Z}_1[k]}^{f+1}| +| \mathcal{X}_{\mathcal{Z}_2[k]}^{f+1}| \geq f+1$.
\end{enumerate}
Moreover, $| \mathcal{A}| \leq f$.
	Thus, there always exists a normal node $i$ either in $\mathcal{Z}_1[k]\cap \mathcal{N}$ or $\mathcal{Z}_2[k]\cap \mathcal{N}$ such that
\begin{equation}\label{f1paths}
	| \mathcal{I}_{i, \mathcal{Z}_1[k]}^{\mathcal{A}} | \geq f+1, \medspace \textup{or} \medspace\medspace | \mathcal{I}_{i, \mathcal{Z}_2[k]}^{\mathcal{A}} | \geq f+1.
\end{equation}
Without loss of generality, we suppose that node $i \in \mathcal{Z}_1[k]\cap \mathcal{N}$ has this property. 
By assumption, $\forall i \in \mathcal{Z}_1[k]\cap \mathcal{N} $, we have
\begin{align*}
	x_i[k]=\overline{x}^* \medspace, \medspace \forall  k\geq k_c.
\end{align*}
Moreover, node $i$ will use at least one value smaller than $\overline{x}^*$ to update its own value. 
This is because among the $f +1$ independent paths originating from the nodes outside $\mathcal{Z}_1[k]$, all the source nodes have values smaller than $\overline{x}^*$ and node $i$ can remove only values along $f$ independent paths.

Also notice that all the values larger than $\overline{x}^*$ must come from adversary nodes and are removed by step 2 of Algorithm~1 since their MMC has a cardinality of at most $f$.
Consequently, node $i$ updates its value as 
\begin{align}\label{getout}
	x_i[k+1] &\leq Q\big((1-\alpha)\overline{x}^*+ \alpha(\overline{x}^*-1) \big) \nonumber \\
	&=Q\left(\overline{x}^*-\alpha \right).
\end{align}
By \eqref{quantizer}, the quantizer takes the floor function as
$Q\left(\overline{x}^*-\alpha \right)=\overline{x}^*-1$ with probability $\alpha$. Thus, with positive probability, we have
\begin{align*}
	x_i[k+1] \leq \overline{x}^*-1.
\end{align*}
This indicates that with positive probability, $ i \notin \mathcal{Z}_1[k+1] \cap \mathcal{N}$. 
Similarly, if node $i \in \mathcal{Z}_2[k]\cap \mathcal{N}$, then with positive probability, its quantizer chooses the ceiling function, then $ i \notin \mathcal{Z}_2[k+1] \cap \mathcal{N}$. Thus, we have proved \eqref{nodegetout}.

Next, we show that with positive probability,
\begin{align}\label{notgetin}
	(\mathcal{N} \setminus \mathcal{Z}_1[k])  &\cap \mathcal{Z}_1[k+1] = \emptyset, \nonumber\\
	(\mathcal{N} \setminus \mathcal{Z}_2[k])  &\cap \mathcal{Z}_2[k+1] = \emptyset. 
\end{align}
We have $\forall i \in \mathcal{N} \setminus \mathcal{Z}_1[k]$, 
\begin{align*}
	x_i[k] \leq \overline{x}^*-1 \medspace, \medspace \forall  k\geq k_c.
\end{align*}
According to Algorithm~1, all values larger than $\overline{x}^*$ will be discarded by node $i$, and the inequality \eqref{getout} holds for node $i$ too. Therefore, with positive probability $\alpha$, $\{ \mathcal{N} \setminus \mathcal{Z}_1[k] \} \cap \mathcal{Z}_1[k+1] = \emptyset$. Similarly, we can prove that with positive probability, $ (\mathcal{N} \setminus \mathcal{Z}_2[k])  \cap \mathcal{Z}_2[k+1] = \emptyset$.

Combining \eqref{nodegetout} and \eqref{notgetin}, we have proved \eqref{shrinking}. Therefore, for any $k \geq k_c + n_N $, it holds with positive probability that 
\begin{align*}
	\big(\mathcal{Z}_1[k] \cup  \mathcal{Z}_2[k] \big)\cap \mathcal{N}= \emptyset,
\end{align*}
since $|\mathcal{N}|=n_N $. This contradicts that the final values of $\overline{x}[k]$ and $\underline{x}[k]$ are $\overline{x}^*$ and $\underline{x}^*$, respectively, proving (C2).

Lastly, we show (C3). Assume that the normal nodes have
reached the common value $x^*$ at time $k$. Since $| \mathcal{A}| \leq f$, according to Algorithm~1, any node $j \in \mathcal{A}$ with value $x_j [k] \neq x^*$ is ignored by the normal nodes. Thus, $x_i [k + 1] = x^*, \forall i\in \mathcal{N}$. This completes the proof.
\hfill  $\blacksquare$

In Theorem~\ref{syn}, our graph condition is consistent with the one for deterministic real-valued consensus \cite{yuan2021resilient}. However, this note studies agents taking quantized values and the convergence to an exact consensus value can be obtained in a probabilistic sense.
Moreover, with longer relay hops (i.e., by increasing $l$), our algorithm has a faster convergence speed than that of one-hop algorithms, which is also reported in \cite{yuan2021resilient}. 
On the other hand, randomization in our quantizer \eqref{quantizer} is crucial for Algorithm~1. For example, if node $i$ is in the set $\mathcal{Z}_1[k]\cap \mathcal{N}$ with $| \mathcal{I}_{i, \mathcal{Z}_1[k]}^{\mathcal{A}} | \geq f+1$ in \eqref{f1paths}, and if it always takes a deterministic ceiling function in \eqref{getout}, then we cannot guarantee that $ i \notin \mathcal{Z}_1[k+1] \cap \mathcal{N}$; see also the related discussions in \cite{dibaji2018resilient}.

\begin{remark}
	We emphasize that our approach can be applied to the problem of binary consensus as well, which has been studied in \cite{Lynch,fischer1985impossibility,wang2020asynchronous,khan2020exact}. As long as the initial states of all agents are restricted to $0$ and $1$, the safety interval in \eqref{safety1} indicates that the normal agents’ values will remain binary and come to agreement eventually. It should be noted that all results in this note remain true for binary consensus.
\end{remark}

Next, we compare our result in Theorem~\ref{syn} with the one in \cite{khan2020exact}.
There, the authors studied the synchronous binary consensus under the $f$-total malicious model and they provided a necessary and sufficient graph condition for a flooding algorithm to succeed. As we proved in \cite{yuan2021resilient}, our condition is equivalent to the one in \cite{khan2020exact} for directed graphs when our relay range is unbounded.

Here, we explain how their algorithm is different from ours.
Their algorithm is executed in different phases, where each phase corresponds to a possible set of faulty nodes. In each phase (i.e., a phase for a fixed $f$-total set $\mathcal{F}$), a certified propagation algorithm (CPA)\footnote{In the CPA method, when a normal agent receives an identical value from at least $f+1$ neighbors, it commits to such a value.} type of method is used to broadcast the value of a source component to the entire network. In each phase, when the set $\mathcal{F}$ is not the true adversary set, the normal agents will take values equal to some normal agent. When the set $\mathcal{F}$ is the true adversary set, the normal agents will agree on a common value, and hence, binary consensus is achieved. Thus, in the worst case, their algorithm needs to be executed for $\binom{n}{f}$ phases to get a true faulty set and then ensures binary consensus. However, our algorithm does not guess the true faulty set and can guarantee resilient quantized consensus with probability 1. Besides, our algorithm can handle asynchronous updates with delays (in Section~\ref{secasyn}), which cannot be solved by the deterministic algorithm in \cite{khan2020exact}.

\section{Asynchronous Network}\label{secasyn}

	We analyze the asynchronous versions of the QMW-MSR algorithm presented in Algorithm~2 under the malicious and Byzantine models.
If node $i$ does not receive any value along some path $P$ originating from its $l$-hop neighbor $j$, then it takes this value as one malicious empty value.

\begin{algorithm}[t]
	\caption{Asynchronous QMW-MSR Algorithm   }
	\LinesNumbered 
	\KwIn{Node $i$ knows $x_i[0]$, $\mathcal{N}_i^{l-}$, $\mathcal{N}_i^{l+}$. }
	\For{$k\geq0$}{
		\textbf{Decide:} to make an update ($ \delta=1$) or not ($ \delta=0$) independently.
		
		\eIf{   $ \delta=1$ }{
			%Put the most recently received $x_j^P[k-\tau_{ij}^P[k]]$, $j\in \mathcal{N}_i^{l-}$ on each $l$-hop path into $\mathcal{M}_i[k]$. 
			$\mathcal{M}_i[k] \leftarrow $ the most recently received $x_j^P[k-\tau_{ij}^P[k]]$, $j\in \mathcal{N}_i^{l-}$ on each $l$-hop path.
			
			Steps 2) and 3) of Algorithm~1. 
			
			Send $m_{ij}[k+1]=(x_i[k+1],P_{ij}[k+1])$ to $\forall j\in \mathcal{N}_i^{l+}$. 
		}
		{
			$x_i[k+1]=x_i[k]$.
		}
		\KwOut{$x_i[k+1]$.}
	}
\end{algorithm}

We introduce the following two mechanisms for normal node $i$ to decide whether it makes an update or not.

(i) Under deterministic updates, each normal node
$i$ makes an update at least once in $\overline{k}$ time steps, that is,
\begin{equation}
	\bigcup_{m=k}^{k+\overline{k}-1} \mathcal{U}[m]=\mathcal{N}, \medspace \forall k\in \mathbb{Z}_+,
\end{equation}
while adversary nodes may deviate from this update setting.

(ii) Under randomized updates, each normal
node $i$ makes an update at time $k\geq0$ with probability $p_i\in
(0, 1]$ in an i.i.d. fashion. That is, for node $i$, at each time $k$,
\begin{equation}
	\textup{Prob}\{i \in \mathcal{U}[k]\} =p_i, \medspace \textup{Prob}\{i \notin \mathcal{U}[k]\} =1-p_i.
\end{equation}
Here, the algorithm remains fully distributed since the probabilities $p_i$ at each node can be different.

\subsection{Asynchronous Updates without Delays}\label{secasynnodelay}

	We first analyze Algorithm~2 under randomized updates without delays under the malicious model. An advantage of randomized updates is that the malicious nodes cannot predict the update times of the normal nodes.
Moreover, there is always nonzero probability that all normal nodes in the system update their states simultaneously at each time $k$.
As a result, the necessary and sufficient condition for this case is the same as that for the synchronous case in Theorem \ref{syn}. 

\begin{theorem}\label{asynrandom}
	Consider a directed network $\mathcal{G} = (\mathcal{V},\mathcal{E})$ with $l$-hop communication, where each normal node updates its value according to the asynchronous QMW-MSR algorithm with parameter $f$ under randomized updates without delays. Under the $f$-total malicious model, resilient quantized consensus is achieved almost surely with safety interval \eqref{safety1} if and only if $\mathcal{G}$ is $(f + 1, f + 1)$-robust with $l$ hops.
\end{theorem}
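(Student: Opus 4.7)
The plan is to mirror the proof of Theorem~\ref{syn}, using Lemma~\ref{synlemma} as the template for sufficiency, and exploiting the fact that randomized updates occur independently with strictly positive probability to recover the same combinatorial reduction used in the synchronous case.

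For necessity, the exact construction used in Theorem~\ref{syn} still applies: if $\mathcal{G}$ is not $(f+1,f+1)$-robust with $l$ hops, pick the same disjoint $\mathcal{V}_1,\mathcal{V}_2$ and initial values, and let all nodes in $\mathcal{X}_{\mathcal{V}_1}^{f+1}\cup \mathcal{X}_{\mathcal{V}_2}^{f+1}$ be malicious and hold constant values forever. Whenever a surviving normal node in $\mathcal{V}_1$ or $\mathcal{V}_2$ decides to update, the MMC of the values originating outside its set has cardinality at most $f$ and is therefore entirely trimmed by step 2 of Algorithm~2, so the asynchronous schedule makes no difference---the normal nodes in $\mathcal{V}_1$ and $\mathcal{V}_2$ never change. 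For sufficiency, (C1) of Lemma~\ref{synlemma} holds because at each step $x_i[k+1]$ is either $x_i[k]$ (no update) or a quantized convex combination of values in $[\underline{x}[k],\overline{x}[k]]$, so the safety interval \eqref{safety1} is preserved; (C3) holds because once all normals equal $x^*$, adversarial deviations are again discarded by step 2, leaving only copies of $x^*$ in the trimmed set of any normal that does update.

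The main step is (C2). As in Theorem~\ref{syn}, define $\overline{x}[k],\underline{x}[k]$ and note they are monotone bounded sequences reaching finite final values $\overline{x}^*,\underline{x}^*$ almost surely; suppose for contradiction $\overline{x}^*>\underline{x}^*$, and form $\mathcal{Z}_1[k],\mathcal{Z}_2[k]$ accordingly. By $(f+1,f+1)$-robustness with $l$ hops and $|\mathcal{A}|\leq f$, there is (WLOG) a normal node $i\in\mathcal{Z}_1[k]\cap\mathcal{N}$ with $|\mathcal{I}_{i,\mathcal{Z}_1[k]}^{\mathcal{A}}|\geq f+1$. The new ingredient is to bound the probability of the following \emph{compound} event at time $k$: every normal node updates (probability $\prod_{j\in\mathcal{N}}p_j>0$), node $i$'s quantizer rounds down so $x_i[k+1]\leq \overline{x}^*-1$ (probability at least $\alpha$ by the computation in \eqref{getout}), and every other normal node's quantizer rounds in the direction that keeps it outside $\mathcal{Z}_1[k+1]\cup\mathcal{Z}_2[k+1]$ (each with probability at least $\alpha$ by the same argument as in \eqref{notgetin}). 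Since the update decisions and quantizer coin flips are independent across agents, the joint probability is bounded below by $\alpha^{n_N}\prod_{j\in\mathcal{N}}p_j>0$, yielding \eqref{shrinking} with positive probability. Iterating at most $n_N$ times gives $(\mathcal{Z}_1[k]\cup\mathcal{Z}_2[k])\cap\mathcal{N}=\emptyset$ with positive probability, contradicting $\overline{x}^*>\underline{x}^*$.

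The hard part is this probabilistic bookkeeping: one has to check that the favorable events at \emph{every} normal node (not just node $i$) can be required simultaneously without loss of generality, and that this combined lower bound survives $n_N$ consecutive steps so that Lemma~\ref{synlemma}(C2) is truly satisfied. Everything else is a direct transcription of the synchronous argument, since no-delay asynchronous communication still ensures that every updating normal node sees the current values of its $l$-hop neighbors, so the trimming step behaves identically.
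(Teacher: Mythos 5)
Your proposal is correct and follows essentially the same route as the paper: reuse the necessity construction and Lemma~\ref{synlemma} from Theorem~\ref{syn}, and for (C2) condition on the positive-probability event that all normal nodes update simultaneously so the synchronous shrinking argument for $\mathcal{Z}_1[k]\cup\mathcal{Z}_2[k]$ goes through. Your explicit lower bound $\alpha^{n_N}\prod_{j\in\mathcal{N}}p_j$ on the joint favorable event is just a more quantitative statement of what the paper asserts qualitatively.
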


\begin{proof}
	(Necessity) The necessity part is the same as that in the proof of Theorem~\ref{syn}.
	
	(Sufficiency) We show that the conditions (C1)–(C3) in Lemma \ref{synlemma} hold. It is easy to see that conditions (C1) and (C3) hold under the randomized updates too. Therefore, we only need to show (C2). 
	
	We know from (C1) that $\overline{x}[k]$ and $\underline{x}[k]$ will reach their final values $\overline{x}^*$ and $\underline{x}^*$ at some time $k_c$, respectively. 
	Like the proof of Theorem \ref{syn}, we prove (C2) by contradiction. Assume  $\overline{x}^*>\underline{x}^*$. Then the sets $\mathcal{Z}_1[k]$ and $\mathcal{Z}_2[k]$ are disjoint and nonempty. 
	
	We first prove that \eqref{nodegetout} holds with positive probability $\forall k\geq k_c $. Note that under randomized updates, the probability for any normal node to update its value at time $k$ is positive.
	Besides, since $\mathcal{G}$ is $(f + 1, f + 1)$-robustness with $l$ hops and $| \mathcal{A}| \leq f$, there is a normal node $i$ either in $\mathcal{Z}_1[k]\cap \mathcal{N}$ or $\mathcal{Z}_2[k]\cap \mathcal{N}$ such that \eqref{f1paths} holds. Thus, by the same reasoning as in the proof of Theorem \ref{syn}, \eqref{nodegetout} can be proved.
	
	Then we show that with positive probability, \eqref{notgetin} holds. For any node $i\in \mathcal{N}\setminus\mathcal{Z}_1[k]$, with positive probability, it makes an update at time $k$ and will not enter $\mathcal{Z}_1[k+1]$ by the same reasoning as in the proof of Theorem \ref{syn}. Similarly, for any node $i\in \mathcal{N}\setminus\mathcal{Z}_2[k]$, with positive probability, it will not enter $\mathcal{Z}_2[k+1]$. Thus, with \eqref{nodegetout} and \eqref{notgetin}, \eqref{shrinking} can be obtained.
	
	Lastly, we conclude that for any $k \geq k_c + n_N $, one of the two sets $\mathcal{Z}_1[k]\cap \mathcal{N}$ and $\mathcal{Z}_2[k]\cap \mathcal{N}$ is empty with positive probability. We have the desired contradiction.
\end{proof}

\subsection{Asynchronous Updates with Delays}\label{secasyndelay}

Next, we see how the QMW-MSR algorithm performs with delays under the malicious model. This asynchrony setting is also studied in many existing works \cite{dibaji2018resilient,qin2012sufficient,senejohnny2019resilience}.

We employ the quantized control input taking account of possible delays in the values from the multi-hop neighbors as 
\begin{equation}
	u_i[k]=Q\bigg(\sum_{j\in \mathcal{N}_i^{l-}} a_{ij}[k]x_j^P[k-\tau_{ij}^P[k]]\bigg),  
\end{equation}
where $\tau_{ij}^P[k]\in \mathbb{Z}_+$ denotes the delay in the $(j,i)$-path $P$ at time $k$ and $x_j^P[k]$ denotes the value of node $j$ at time $k$ sent along path $P$.
The delays are time-varying and may be different at each path, but we assume the common upper bound $\tau$ on any normal path $P$ (all nodes on path $P$ are normal) as
\begin{equation}
	0\leq \tau_{ij}^P[k] \leq \tau,\medspace j\in \mathcal{N}_i^{l-}, \medspace k\in \mathbb{Z}_+.
\end{equation}
Hence, each normal node $i$ is aware of the value
of each of its normal $l$-hop neighbor $j$ on each normal $(j,i)$-path $P$ at least once in $\tau$ time steps, but
possibly at different time instants \cite{dibaji2018resilient}. 
This assumption also indicates that for each normal node, the gap between two consecutive updates should be less than $\tau$, i.e., $\overline{k}\leq\tau$. 
Besides, agents need not know this bound.
 
Let $D[k]$ be a diagonal matrix whose $i$th entry is
given by $d_i[k]=\sum_{j=1}^{n} a_{ij}[k].$ Then,
let the matrices $A_\gamma[k]\in \mathbb{R}^{n\times n}$ for $ 0\leq \gamma \leq \tau$, and $L_{\tau}[k]\in \mathbb{R}^{n\times (\tau +1)n}$ be given by
\begin{equation}
	A_\gamma[k]=\left\{
	\begin{array}{lll} 
		a_{ij}[k] &\textup{if} \thinspace i\neq j \thinspace\textup{and}\thinspace \tau_{ij}[k]=\gamma,\\
		0 & \textup{otherwise,}
	\end{array}
	\right.
\end{equation}
and $L_{\tau}[k]=\Big[ D[k]-A_0[k] \medspace\medspace\medspace -A_1[k] \medspace\medspace\medspace \cdots \medspace\medspace\medspace -A_{\tau}[k] \Big].$
Note that the summation of the entries of each row of $L_{\tau}[k]$ is zero.

Now, the control input can be expressed as
\begin{equation}
	\begin{array}{lll} 
		u^N[k] =Q\big(-L_{\tau}^N[k]z[k]\big),\\
		u^A[k] : \textup{arbitrary,}
	\end{array}
\end{equation}
where $z[k]= [x[k]^T x[k-1]^T \cdots  x[k-\tau]^T]^T$ is a $(\tau+1)n$-dimensional vector for $k\geq0$ and $L_{\tau}^N[k]$ is a matrix formed by the first $n_N$ rows of $L_{\tau}[k]$. Here, $z[0] = [x[0]^T 0^T \cdots 0^T]^T$. Then, the agent dynamics can be written as
\begin{equation} \label{system2}
	x[k+1]=Q\big(\Gamma[k] z[k]\big) +   \begin{bmatrix} 0  \\ I_{n_A} \end{bmatrix}u^A[k],
\end{equation}
where $\Gamma[k]$ is an  $n\times(\tau+1)n$ matrix given by $\Gamma[k] = \begin{bmatrix} I_n &  0 \end{bmatrix} -   \begin{bmatrix} L_{\tau}^N[k]^T  & 0 \end{bmatrix}^T. $
The safety interval is the same as the synchronous one and can be rewritten as
\begin{equation}  \label{safety2}
	\mathcal{S}_{\tau}=\Big[ \min z^N[0], \max z^N[0] \Big].            	
\end{equation}

The main result of this section now follows. Its proof can be found in the Appendix~C.

\begin{theorem}\label{asyndetermin}
	Consider a directed network $\mathcal{G} = (\mathcal{V},\mathcal{E})$ with $l$-hop communication, where each normal node updates its value according to the asynchronous QMW-MSR algorithm with parameter $f$ under deterministic updates with delays. Under the $f$-total/local malicious model, resilient quantized consensus is achieved almost surely only if $\mathcal{G}$ is $(f + 1, f + 1)$-robust with $l$ hops.
	Moreover, if $\mathcal{G}$ is $(f + 1)$-strictly robust with $l$ hops, then resilient quantized consensus is reached almost surely with safety interval \eqref{safety2}.
\end{theorem}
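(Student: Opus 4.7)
The plan is to mirror the structure of the proof of Theorem~\ref{syn}, but carried out over the extended state $z[k]=[x[k]^T\cdots x[k-\tau]^T]^T$ and verified through the three conditions of Lemma~\ref{synlemma} (interpreted so that the safety and persistence statements apply to $z^N[k]$ rather than just $x^N[k]$). The stronger hypothesis of $(f+1)$-strict robustness with $l$ hops, rather than mere $(f+1,f+1)$-robustness, is what absorbs the difficulty introduced by bounded delays and deterministic asynchronous updates.

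For \textbf{necessity}, I would observe that the synchronous, zero-delay setting of Theorem~\ref{syn} is a special case ($\overline{k}=\tau=0$) of the present model. Hence, if $\mathcal{G}$ fails to be $(f+1,f+1)$-robust with $l$ hops, the same adversarial construction (two sets $\mathcal{V}_1,\mathcal{V}_2$ initialised at $a$ and $b$, with $| \mathcal{X}_{\mathcal{V}_1}^{f+1}| +| \mathcal{X}_{\mathcal{V}_2}^{f+1}|\le f$ nodes acting as stationary malicious agents) prevents the two sides from ever merging, since step~2 of Algorithm~2 discards every cross-set path whose MMC has cardinality at most $f$.

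For \textbf{sufficiency}, I would verify (C1) by letting $\overline{z}[k]=\max z^N[k]$ and $\underline{z}[k]=\min z^N[k]$ and noting that every value that survives the MMC trimming in Algorithm~2 lies in $[\underline{z}[k],\overline{z}[k]]$: messages above $\overline{z}[k]$ must be carried by adversary-containing paths whose MMC has cardinality at most $f$, hence are removed. The quantized convex combination \eqref{msrupdate} then produces an integer in the same interval, and sliding the length-$(\tau+1)$ window gives $\overline{z}[k+1]\le\overline{z}[k]$ and $\underline{z}[k+1]\ge\underline{z}[k]$. Induction yields $x_i[k]\in\mathcal{S}_{\tau}$ for all normal $i$. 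Condition (C3) is handled by the same trimming argument: once every entry of $z^N[k]$ equals a common integer $x^*$, every normal update uses only values equal to $x^*$ (adversary deviations are again covered by an MMC of size $\le f$), the quantizer returns $x^*$ deterministically, and consensus in $z^N$ persists.

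The heart of the argument, and the main obstacle, is (C2). Assume for contradiction that $\overline{z}[k]$ and $\underline{z}[k]$ converge to distinct integer limits $\overline{z}^*>\underline{z}^*$ at some finite $k_c$, and define
\begin{equation*}
\mathcal{Z}_1[k]=\{i\in\mathcal{V}:x_i[k]\ge\overline{z}^*\},\qquad \mathcal{Z}_2[k]=\{i\in\mathcal{V}:x_i[k]\le\underline{z}^*\}.
\end{equation*}
Because $\mathcal{G}$ is $(f+1)$-strictly robust with $l$ hops, removing the $f$-total/local adversary set $\mathcal{A}$ leaves a subgraph $\mathcal{G}_{\mathcal{H}}$ that is $(f+1)$-robust with $l$ hops, so for any $k\ge k_c$ at least one of the disjoint sets $\mathcal{Z}_1[k]\cap\mathcal{N}$ or $\mathcal{Z}_2[k]\cap\mathcal{N}$ contains a node $i$ with $|\mathcal{I}_{i,\mathcal{Z}_a[k]}^{\mathcal{A}}|\ge f+1$, where the $f+1$ independent $l$-hop paths lie entirely in $\mathcal{N}$ and originate outside $\mathcal{Z}_a[k]$. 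The delay-specific part of the argument is then to combine the bounded-delay assumption $\tau_{ij}^P[k]\le\tau$ with the deterministic-update property $\overline{k}\le\tau$ to conclude that within the window $[k,k+\tau]$ node $i$ performs at least one update using buffered values recorded at times $\ge k_c$; by monotonicity of $\overline{z}$ each such buffered value on the $f+1$ normal independent paths is strictly below $\overline{z}^*$, and at most $f$ of them can be eliminated by the MMC trimming. The randomized quantizer \eqref{quantizer} then produces $x_i[k+1]\le\overline{z}^*-1$ with probability at least $\alpha>0$, so $i$ exits $\mathcal{Z}_1$; the symmetric reasoning prevents any normal node outside $\mathcal{Z}_1[k]$ from entering $\mathcal{Z}_1[k+1]$. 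Iterating this ``shrinking'' at most $n_N$ times, each iteration requiring a window of at most $\tau+\overline{k}$ steps, empties $\mathcal{Z}_1[k]\cap\mathcal{N}$ or $\mathcal{Z}_2[k]\cap\mathcal{N}$ with strictly positive probability, contradicting the assumption that $\overline{z}^*,\underline{z}^*$ are the final limits. Lemma~\ref{synlemma} then yields almost-sure resilient quantized consensus with safety interval \eqref{safety2}.
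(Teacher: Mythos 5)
Your proposal follows essentially the same route as the paper's Appendix~C proof: necessity by reduction to the synchronous case, and sufficiency by verifying (C1)--(C3) of Lemma~\ref{synlemma} with the windowed extrema $\overline{z}[k],\underline{z}[k]$, the shrinking-set contradiction, and strict robustness supplying $(f+1)$-robustness of the normal subgraph so that the randomized quantizer forces an exit with probability at least $\alpha$. The only detail you gloss over is that with the windowed definition the sets $\mathcal{Z}_1[k]$ and $\mathcal{Z}_2[k]$ could in principle be empty at a particular instant even though $\overline{z}[k]=\overline{z}^*$; the paper closes this by explicitly showing $\mathcal{Z}_1[k_c]\neq\emptyset$ via the no-entry property \eqref{notgetin_deterministic} applied over the window $\gamma=0,\dots,\tau$, and you should add that step before invoking robustness on the pair of sets.
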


The work \cite{dibaji2018resilient} provided a sufficient condition for resilient quantized consensus for the one-hop case under asynchronous deterministic updates with delays, which is that the graph $\mathcal{G}$
is $(2f + 1)$-robust. By Lemma~\ref{abc}, if graph $\mathcal{G}$ is $(2f + 1)$-robust, then it is $(f + 1)$-strictly robust (with $1$ hop), but not vice versa. Hence, our sufficient condition is tighter than that in \cite{dibaji2018resilient}. However, checking strict robustness of a given graph is more complex compared to checking robustness since we need to check $\binom{n}{f}$ times for robustness of $\mathcal{G}_{\mathcal{H}}$.

\begin{remark}
In Theorem~\ref{asynrandom}, through randomized updates, we are able to derive a tighter sufficient condition for quantized consensus of asynchronous networks without delay. However, when communication delays are present, that condition is not sufficient anymore even if we apply randomized updates. The reason is that if nonuniform delays are in presence in different paths, then malicious nodes can make updates frequently, and their delayed values received by the normal neighbors may appear different for different neighbors. In this case, malicious nodes can have the attack ability close to Byzantine nodes. Hence, the stricter sufficient condition in Theorem~\ref{asyndetermin} is needed.
\end{remark}

\subsection{Quantized Consensus under the Byzantine Model}\label{secbyzan}

To handle the Byzantine model, the necessary condition needs to be even stricter than the one for the malicious model. Moreover, the necessary and sufficient conditions for the synchronous and asynchronous QMW-MSR algorithms remain the same. Recall that the (strict) robustness notion is different under the $f$-total and $f$-local models by Definition~\ref{rs-robust}.

\begin{theorem}\label{theorem_byzan}
	Consider a directed network $\mathcal{G} = (\mathcal{V},\mathcal{E})$ with $l$-hop communication, where all normal nodes update values according to the synchronous QMW-MSR algorithm or the asynchronous QMW-MSR algorithm under randomized/deterministic updates with/without delays. Under the $f$-total/local Byzantine model, resilient quantized consensus is achieved almost surely with safety interval \eqref{safety2} if and only if $\mathcal{G}$ is $(f + 1)$-strictly robust with $l$ hops.
\end{theorem}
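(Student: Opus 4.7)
The plan is to follow the three-condition template of Lemma~\ref{synlemma}, as in Theorems~\ref{syn}, \ref{asynrandom}, and \ref{asyndetermin}. When the Byzantine set $\mathcal{A}$ is taken as the removed set $\mathcal{F}$ in Definition~\ref{strict_robustness}, $(f+1)$-strict robustness is precisely the property that makes the normal subgraph $\mathcal{G}_{\mathcal{N}}$ be $(f+1)$-robust with $l$ hops, which is the structural ingredient needed to run the malicious-case arguments on the normals alone. Safety (C1) and the interval \eqref{safety2} follow in every update protocol from the quantized convex combination in step~3 of Algorithm~1, and (C3) is immediate because any Byzantine value differing from a common normal value $x^*$ must reach a normal node along paths whose message cover lies in $\mathcal{A}$ and hence has cardinality $\leq f$; step~2 discards such values. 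Only (C2) requires genuine work.

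For necessity, I would suppose $\mathcal{G}$ is not $(f+1)$-strictly robust with $l$ hops, so that some $f$-total/local $\mathcal{F}\subset\mathcal{V}$ and nonempty disjoint $\mathcal{V}_1,\mathcal{V}_2\subset\mathcal{H}=\mathcal{V}\setminus\mathcal{F}$ exist for which every node in $\mathcal{V}_r$ has at most $f$ independent paths in $\mathcal{G}_{\mathcal{H}}$ originating outside $\mathcal{V}_r$ ($r=1,2$). Initialize $a$ on $\mathcal{V}_1$, $b>a$ on $\mathcal{V}_2$, and $\lfloor(a+b)/2\rfloor$ on the remaining normals, and declare $\mathcal{F}$ Byzantine. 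Each node in $\mathcal{F}$ is instructed to forward the value $a$ on every outgoing path eventually reaching $\mathcal{V}_1$ and the value $b$ on every outgoing path eventually reaching $\mathcal{V}_2$, which is permitted by the point-to-point Byzantine model. Any message with value strictly larger than $a$ arriving at $i\in\mathcal{V}_1$ must then travel along a path of $\mathcal{G}_{\mathcal{H}}$, and the collection of such paths has MMC at most $f$; step~2 trims them, so $i$ retains $a$. The symmetric argument freezes $\mathcal{V}_2$ at $b$, violating agreement under any update protocol, with or without delays.

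For sufficiency, assume $(f+1)$-strict robustness. With $\mathcal{F}=\mathcal{A}$, $\mathcal{G}_{\mathcal{N}}$ is $(f+1)$-robust with $l$ hops. I would prove (C2) by contradiction as in Theorem~\ref{syn}: let $\overline{x}^*>\underline{x}^*$ be the eventual normal extrema, define $\mathcal{Z}_1[k],\mathcal{Z}_2[k]$, and set $\tilde{\mathcal{Z}}_r[k]=\mathcal{Z}_r[k]\cap\mathcal{N}$, which are nonempty and disjoint in $\mathcal{N}$. Applying the $(f+1)$-robustness of $\mathcal{G}_{\mathcal{N}}$ to $\tilde{\mathcal{Z}}_1[k],\tilde{\mathcal{Z}}_2[k]$ yields a normal node $i$, say in $\tilde{\mathcal{Z}}_1[k]$, with at least $f+1$ independent all-normal paths from $\mathcal{N}\setminus\tilde{\mathcal{Z}}_1[k]$; those paths deliver true source values $\leq\overline{x}^*-1$, and since their MMC is at least $f+1$, trimming cannot discard all of them. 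Conversely, any value strictly above $\overline{x}^*$ must have traversed some node of $\mathcal{A}$, so the set of such values has MMC $\leq |\mathcal{A}|\leq f$ and is removed by step~2(b). The quantized update therefore obeys $x_i[k+1]\leq Q(\overline{x}^*-\alpha)$, which equals $\overline{x}^*-1$ with probability at least $\alpha$; combined with the counterpart of \eqref{notgetin}, this yields the shrinking relation \eqref{shrinking}, forcing a contradiction within $n_N$ steps. The asynchronous-randomized and deterministic-with-delays extensions reuse, line for line, the additional ingredients of Theorems~\ref{asynrandom} and \ref{asyndetermin} (positive single-step update probability, augmented state $z[k]$, refresh within $\tau$ steps), since only the graph-theoretic input changes.

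The chief obstacle is making the MMC bookkeeping rigorous under the Byzantine threat, where a single adversary may dispatch conflicting values along its various outgoing paths, so one must rule out a coordinated injection producing an extremal set whose MMC exceeds $f$. The resolution is encoded in Definition~\ref{strict_robustness}: every Byzantine-contaminated extremal value carries a path that meets $\mathcal{A}$, so the union of all such paths is covered by $\mathcal{A}$ of cardinality $\leq f$ and is trimmed; the remaining extremal value, guaranteed to exist by $(f+1)$-robustness of $\mathcal{G}_{\mathcal{N}}$, originates from a normal node and carries genuine information. This is why $(f+1)$-strict robustness with $l$ hops---rather than $(f+1,f+1)$-robustness or $(2f+1)$-robustness---is the precise graph condition for the Byzantine model.
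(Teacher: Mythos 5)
Your proof is correct and follows essentially the same route as the paper: it verifies (C1)--(C3) of Lemma~\ref{synlemma}, with the key step being that $(f+1)$-strict robustness with respect to $\mathcal{F}=\mathcal{A}$ makes the normal subgraph $(f+1)$-robust with $l$ hops, so that the argument of Theorem~\ref{asyndetermin} carries over verbatim once $\mathcal{Z}_1[k],\mathcal{Z}_2[k]$ are taken as subsets of $\mathcal{N}$ and Byzantine-contaminated extremal values are trimmed because their message cover lies in the adversary set. The only substantive difference is that you spell out the necessity construction, which the paper omits by citing \cite{yuan2022asynchronous}; one small imprecision is that under the $f$-local model the trimming bound should be stated via $|\mathcal{N}_i^{l-}\cap\mathcal{A}|\le f$ rather than $|\mathcal{A}|\le f$.
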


The proof can be found in the Appendix~D.
Recently, \cite{wang2020asynchronous} studied asynchronous EBC in undirected networks, and they used a flooding algorithm similar to that in \cite{khan2020exact}, but with a randomization process at each node. They proved that the necessary and sufficient graph condition for asynchronous EBC is the same as the one for the synchronous case \cite{dolev1982byzantine}: (i) $n>3f$ and (ii) node-connectivity is no less than $2f+1$. This fact is also observed in our results: Our graph condition with $l^*$-hop communication in undirected networks is the same as the above condition; this was proved in \cite{yuan2022asynchronous}.

\begin{remark}
		We compare our result in Theorem~\ref{theorem_byzan} with the one from \cite{wang2020asynchronous}.
		First, their algorithm may take innumerous time for receiving the $n-f$ values in each round due to majority voting on each node. In contrast, our algorithm does not wait until each node receives enough values and it would continue to update its value towards consensus at each local time step.
		We emphasize that \cite{wang2020asynchronous} and \cite{khan2020exact} can handle only the case of binary consensus, while our method can achieve integer-valued consensus in directed networks.
		Moreover, we have seen in Lemma~\ref{lemmastarcyle} that for some graphs, we can achieve the same level of resilience with a much smaller number of hops compared with the flooding algorithm \cite{wang2020asynchronous}.  
\end{remark}

To support the above comparison, we briefly explain how the algorithm in \cite{wang2020asynchronous} works.
There, EBC is achieved with high probability and the majority voting algorithm plays a key role. Their algorithm is executed in phases, but such a notion of phases is close to that of time slots and does not depend on a given faulty node set.
In particular, they used an authenticated double-echo broadcast algorithm, where identical messages of any normal agent will be delivered to all nodes. In each phase, a normal node $i$ has different executions in three rounds. In the first two rounds, node $i$ will determine the majority value among those received from others. If its own value is the same as the majority value, then it broadcasts an empty value in the third round. Otherwise, it sets its value to the majority value and broadcasts the new value in the third round. Then, it collects the values from others again. If more than $f$ values are the same as its own, then it keeps its value for the next phase.
If there is no value existing in more than $f$ messages, it makes a coin toss to get a random value and keeps this value in the next phase. Finally, node $i$ terminates when it finds that $n-f$ values are the same as its own.

\section{Numerical Examples}\label{sec_sim}

We conduct simulations of the QMW-MSR algorithm.
In all figures of time responses in this section, the red dashed lines represent the values from/through adversaries, and the lines not in red color represent the values of normal nodes.

\subsection{Synchronous QMW-MSR Algorithm}\label{sec_4hops}

First, we conduct simulations for the synchronous QMW-MSR algorithm.
Consider the undirected network in Fig.~\ref{graph1}(a) with the adversary set $\mathcal{A}=\{8\}$. 
Let the initial states be $x[0]=[4\ 5\ 6\ 7\ 8\ 9\ 3\ 1]^T$. 
As discussed in Section~\ref{sec_robustness}, this graph is not $(2, 2)$-robust (with $1$ hop), and hence, is not robust enough to tolerate the $1$-total malicious model using the one-hop QW-MSR algorithm from \cite{dibaji2018resilient}. Besides, it is not $2$-strictly robust (with $1$ hop) and is not robust enough to tolerate the $1$-total Byzantine model \cite{vaidya2012iterative}.
Moreover, by Lemma~\ref{lemmacycle}, it is not $(2, 2)$-robust with $l\leq 3$ hops but is $(2, 2)$-robust with $4$ hops.

\begin{figure}[t]
	\centering
	\subfigure[\scriptsize{One-hop algorithm.}]{
		\includegraphics[width=3.3in,height=1.4in]{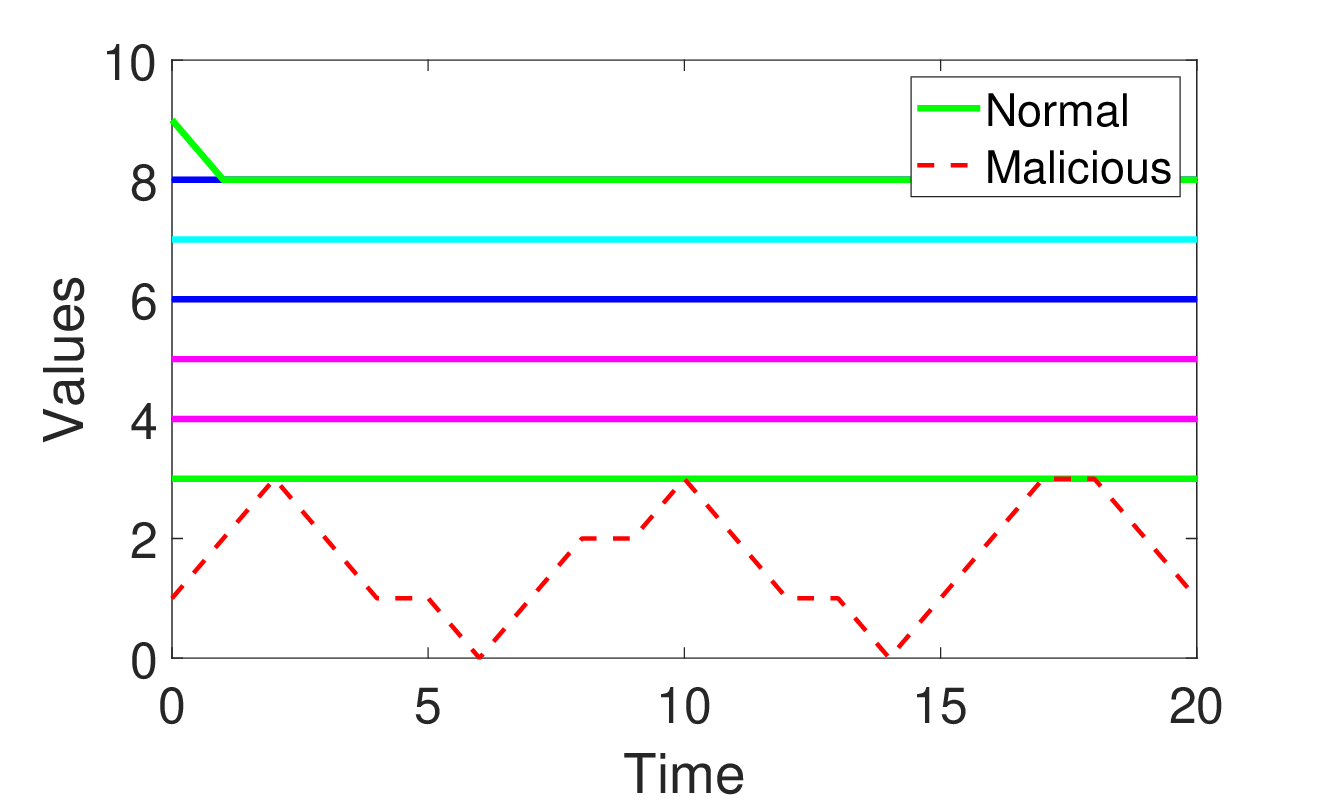}
	}
	\vspace{-4pt}
	
	\subfigure[\scriptsize{Four-hop algorithm.}]{
		\includegraphics[width=3.3in,height=1.4in]{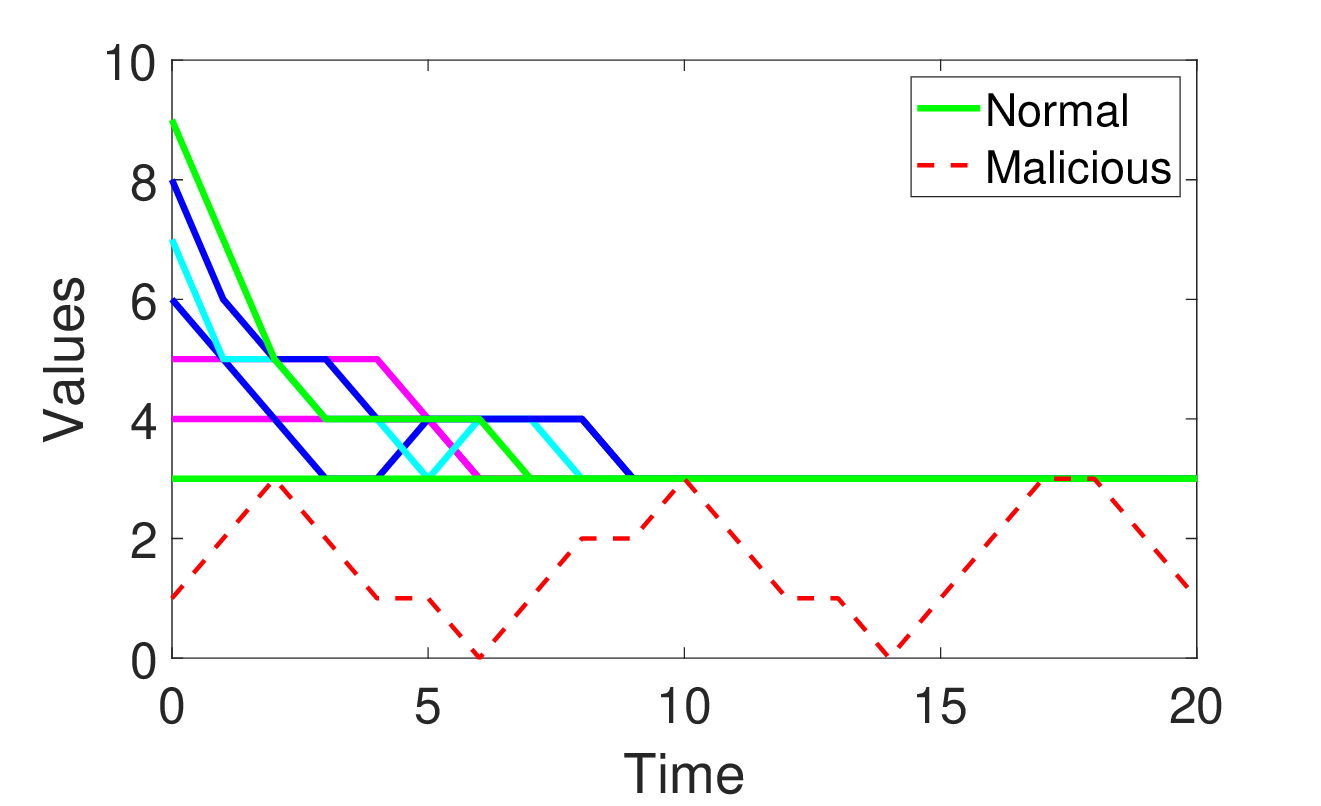}
	}
	%	\subfigure[\scriptsize{Consensus errors of different algorithms.}]{
		%		\includegraphics[width=3.3in,height=1.6in]{}
		%	}
	\vspace{-4pt}
	\caption{Time responses of the synchronous QMW-MSR algorithm under the 1-total malicious model in the network of Fig.~\ref{graph1}(a).}
	\label{8-syn-malicious}
	\vspace*{-3.0mm}
\end{figure}

\begin{figure}[t]
	\centering
	\includegraphics[width=3.3in,height=1.4in]{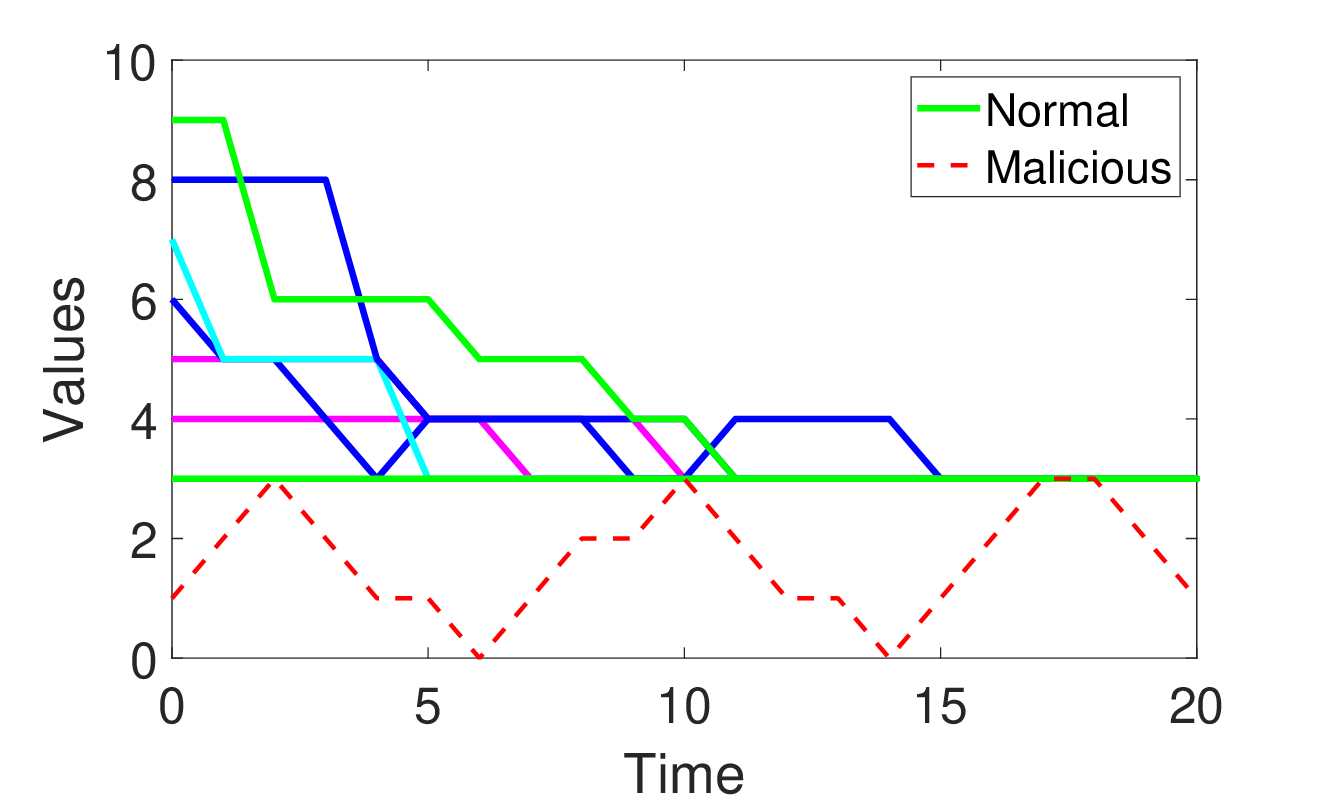}
	\vspace{-4pt}
	\caption{Time responses of the asynchronous four-hop QMW-MSR algorithm under the 1-total malicious model without delays in the network of Fig.~\ref{graph1}(a).}
	\label{8-asyn-random}
	\vspace*{-3.0mm}
\end{figure}

\textit{1) Malicious model: }
We set node 8 to be malicious and let its value evolve based on a quantized sine function w.r.t. time. Suppose that node 8 changes all the relayed values to the same value as its own at each time.
Then, normal nodes update their values using the one-hop, two-hop, three-hop, and four-hop versions of the QMW-MSR algorithm. (Note that the one-hop QMW-MSR algorithm is equivalent to the QW-MSR algorithm from \cite{dibaji2018resilient}.)
Time responses of the one-hop and four-hop algorithms are given in Figs.~\ref{8-syn-malicious}(a) and \ref{8-syn-malicious}(b), respectively.
%Moreover, consensus errors (i.e., $\Delta x_0[k]= \max x^N[k]- \min x^N[k]$) of different algorithms are given in Fig.~\ref{8-syn-malicious}(c). 
Observe that consensus is not achieved until normal nodes use the four-hop algorithm, which verifies the result in Theorem~\ref{syn}.

\textit{2) Byzantine model: }
The cycle network is not $2$-strictly robust with any number of hops since the minimum in-degree of an $(f+1)$-strictly robust graph is $2f+1$ (see \cite{su2017reaching,yuan2022asynchronous}). Therefore, the cycle network cannot achieve resilient quantized consensus under the Byzantine model unless increasing edges or utilizing external authentication. This example has revealed the gap between the graph conditions for the malicious and Byzantine models.

\subsection{Asynchronous QMW-MSR Algorithm}

\textit{1) Without delays: }
First, we consider asynchronous randomized updates without delays for the cycle network in Fig.~\ref{graph1}(a). For malicious node 8, we assume that it modifies its own and relayed values.  
Yet, observe in Fig.~\ref{8-asyn-random} that quantized consensus is achieved using the four-hop algorithm. This verifies the result in Theorem~\ref{asynrandom}.

\begin{figure}[t]
	\centering
	\includegraphics[width=3.3in,height=1.4in]{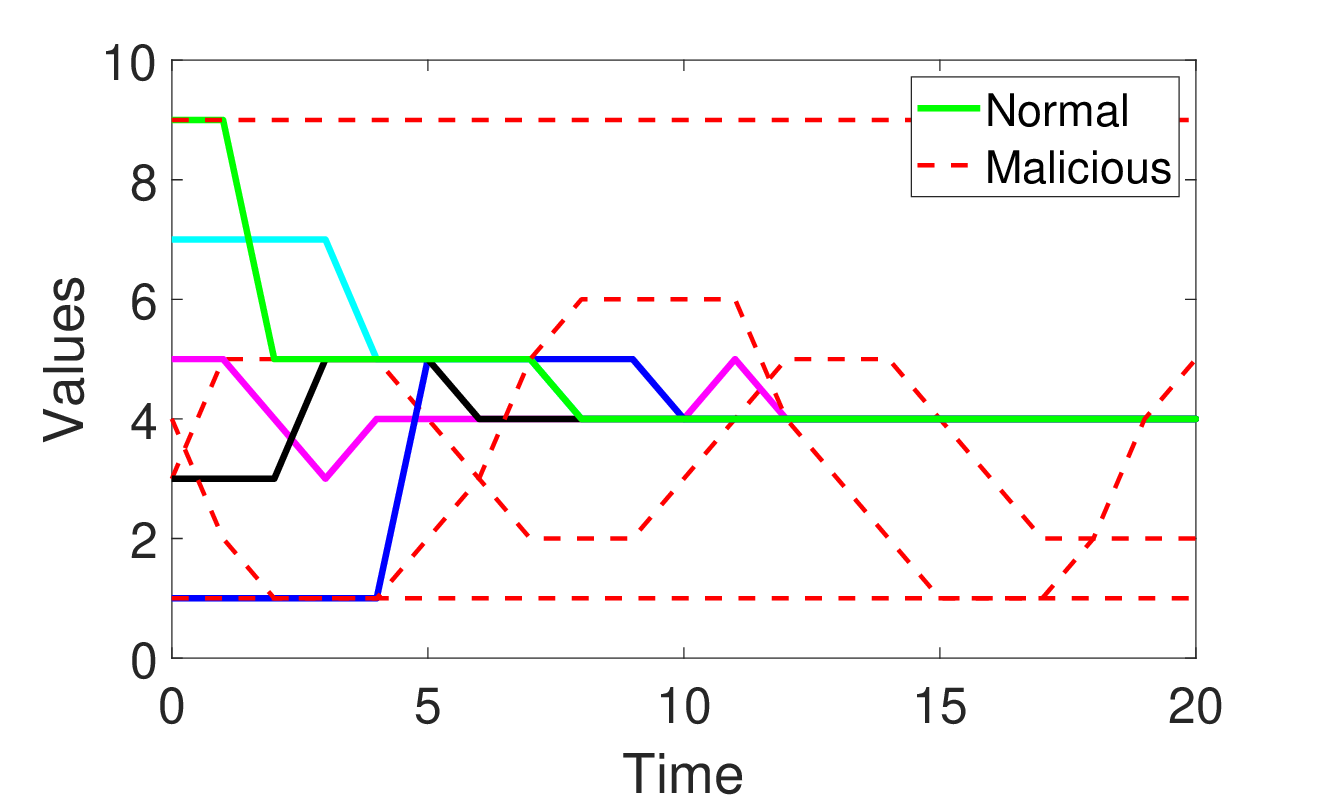}
	\vspace{-4pt}
	\caption{Time responses of the asynchronous two-hop QMW-MSR algorithm under the 1-total malicious model with delays in the network of Fig.~\ref{graph1}(b).}
	\label{6-two-hop-asyn-delay}
	\vspace*{-3.0mm}
\end{figure}

\begin{figure}[t]
	\centering
	\includegraphics[width=1.4in]{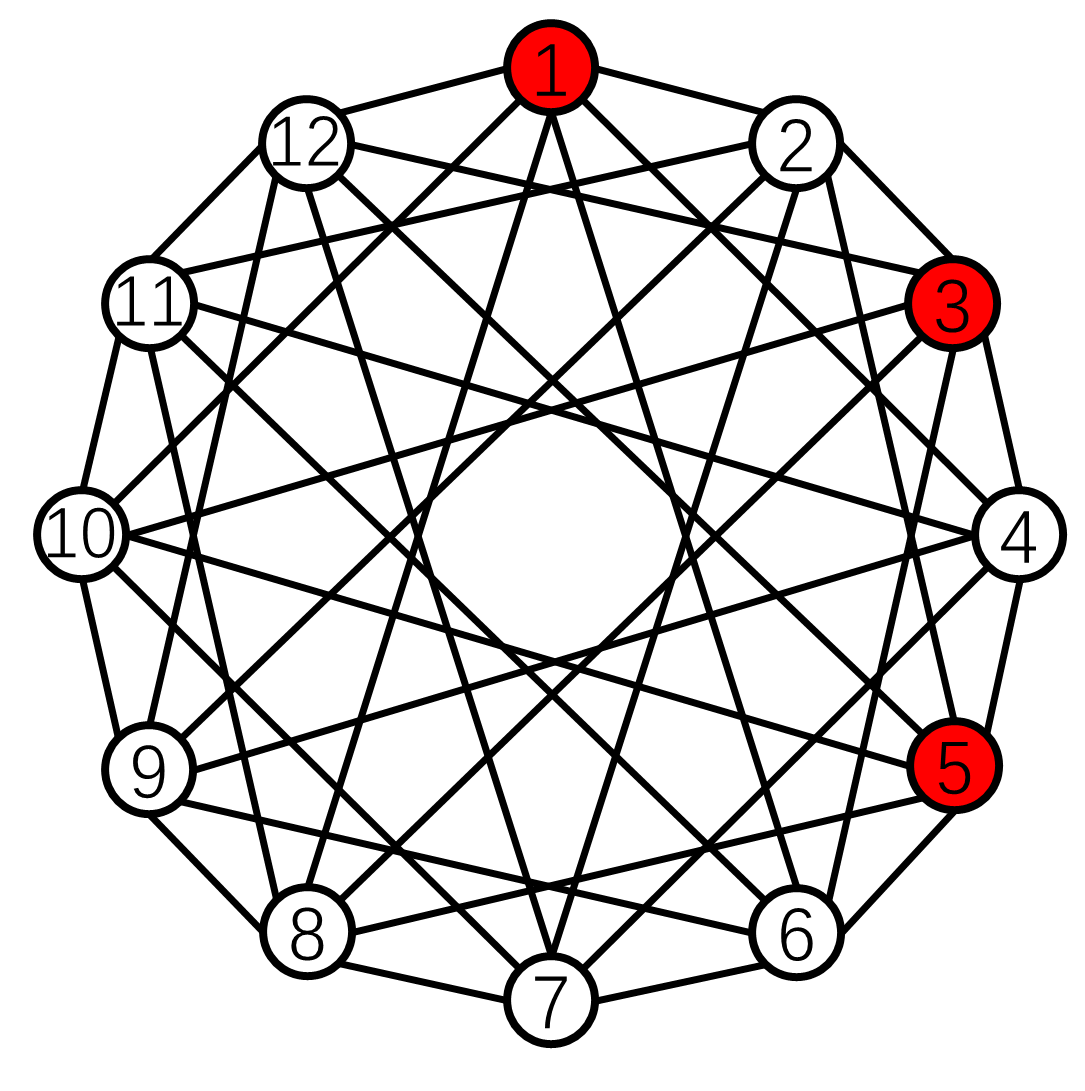}
	\vspace{-4pt}
	\caption{The undirected graph is (3, 3)-robust with 1 hop and is (4,4)-robust with 2 hops.}
	\label{12node_graph}
	\vspace*{-3.0mm}
\end{figure}

\textit{2) With delays: }
Consider the network in Fig.~\ref{graph1}(b) with $\mathcal{A}=\{1\}$. Let $x[0]=[3\ 5\ 1\ 7\ 3\ 9]^T$. With the one-hop algorithm \cite{dibaji2018resilient}, it is not $2$-robust and hence is not robust enough to tolerate $1$-total malicious/Byzantine model. Next, consider the two-hop algorithm under asynchronous deterministic updates with delays for this network. It becomes $2$-strictly robust with $2$ hops.
For malicious node 1, we assume that it not only manipulates its own value but also relays different values to neighbors asynchronously (i.e., it sends out four different values). Moreover, the delays for the messages from one-hop and two-hop neighbors are set as 0 and 1, respectively.
Observe in Fig.~\ref{6-two-hop-asyn-delay} that resilient quantized consensus is attained.

As we discussed earlier, malicious agents under communication delays can cause similar attacks as Byzantine agents. Hence, the simulation for the Byzantine model is similar to that depicted in Fig.~\ref{6-two-hop-asyn-delay}. Besides, the algorithm in \cite{wang2020asynchronous} can handle asynchronous EBC in only undirected networks and is thus not applicable in this case.

\subsection{Simulations in a Medium-Sized Network with Tolerance Limit of Malicious Agents}\label{sec_sim_12}

Consider the network under the 3-total malicious model in Fig.~\ref{12node_graph}.
From Lemma 7.6 in \cite{yuan2021resilient}, we know that to satisfy $(f+1,f+1)$-robustness with $l$ hops in Theorem~\ref{syn}, a network must have minimum in-degree of $2f$. Thus, this network can tolerate at most 3 malicious agents since its minimum in-degree is 6.

\begin{figure}[t]
	\centering
	\subfigure[\scriptsize{One-hop algorithm.}]{
		\includegraphics[width=3.3in,height=1.4in]{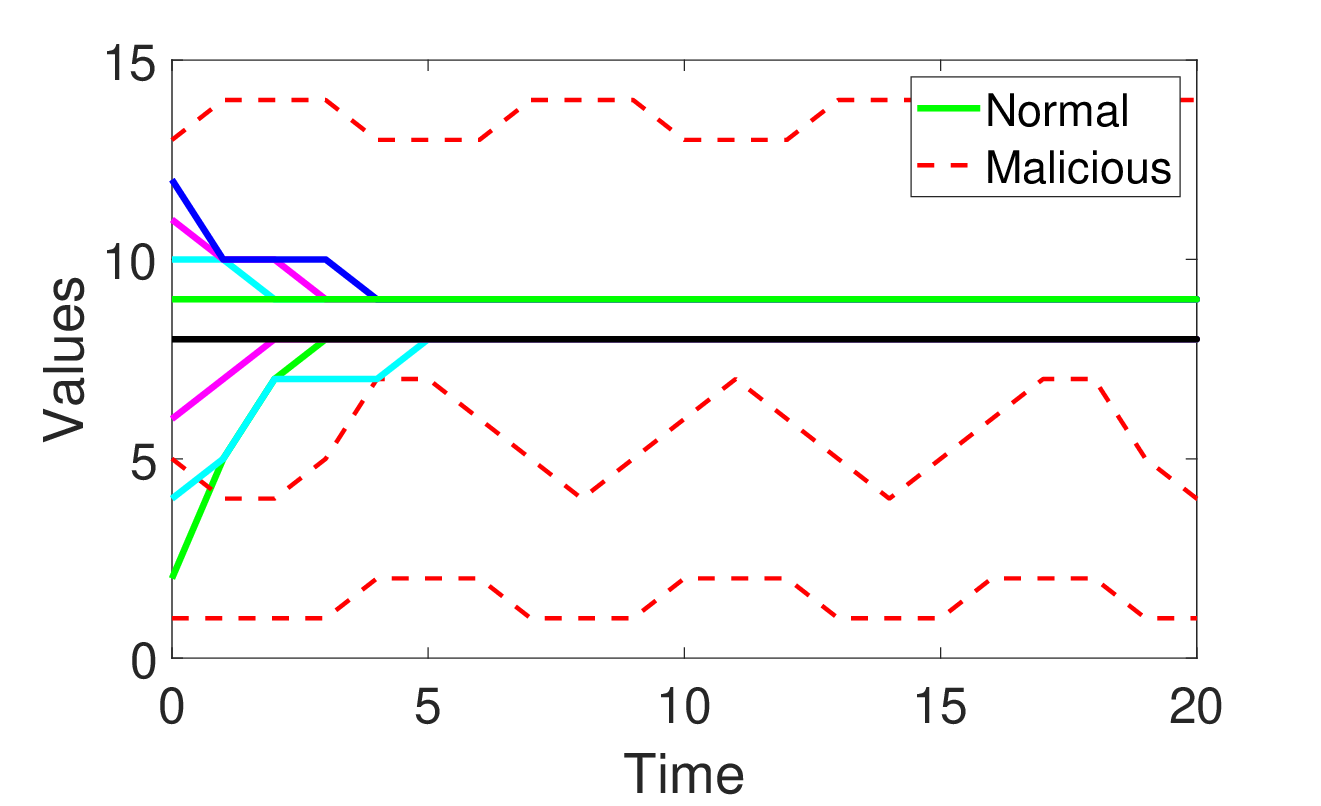}
	}
	\vspace{-4pt}
	
	\subfigure[\scriptsize{Two-hop algorithm.}]{
		\includegraphics[width=3.3in,height=1.4in]{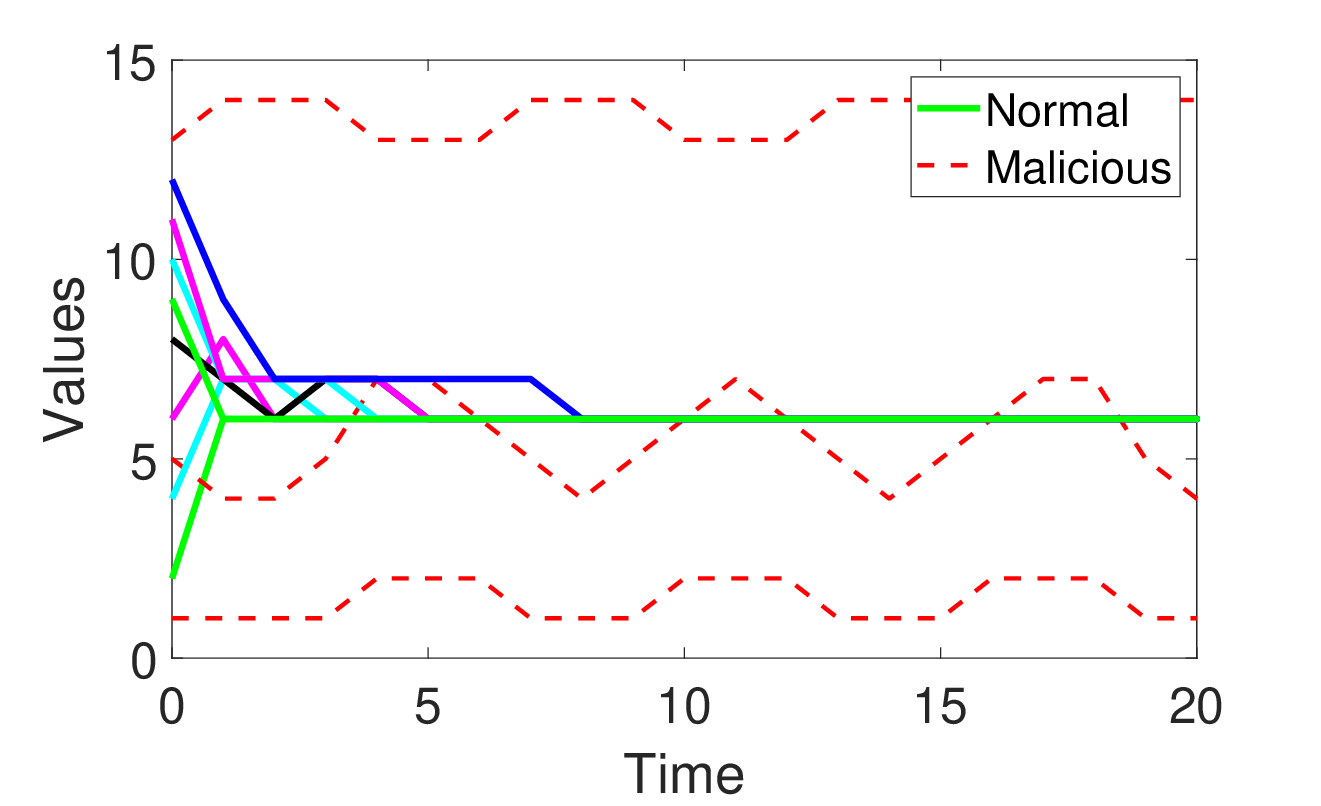}
	}
	\vspace{-4pt}
	\caption{Time responses of the synchronous QMW-MSR algorithm under the 3-total malicious model in the network of Fig.~\ref{12node_graph}.}
	\label{12-mali-135}
	\vspace*{-3.0mm}
\end{figure}

Let the set of adversaries $\mathcal{A}=\{1,3,5\}$.
This network is not $(4, 4)$-robust (with $1$ hop), e.g., consider the node sets $ \{1,2,3,4,5,6\}$ and $ \{7,8,9,10,11,12\}$. Yet, by Lemma~\ref{lemma2f}, it is $(4, 4)$-robust with $2$ hops.
Let $x_i[0]\in (0,15), \forall i \in \mathcal{V}$, and suppose that the malicious nodes change their own and relayed values to oscillating integers as the red dashed lines in Fig.~\ref{12-mali-135}. 
Observe in Fig.~\ref{12-mali-135}(a) that quantized consensus is not achieved using the one-hop algorithm. 
However, in Fig.~\ref{12-mali-135}(b), quantized consensus is achieved using the two-hop algorithm.

In this example, we have achieved the maximum tolerance of the malicious agents with our algorithm using only two-hop communication. In contrast, the flooding algorithm from \cite{khan2020exact} requires each normal node to send its value to all the nodes in the network along at most $(n-1)$-hop paths. Hence, our algorithm is more efficient in exploiting the ability of the multi-hop techniques to achieve the same level of tolerance of malicious agents. Besides, our algorithm can guarantee resilient integer-valued consensus and can be extended to asynchronous updates with delays. These features highlight our advantages over \cite{khan2020exact}.

\section{Conclusion}

We have proposed a multi-hop algorithm to solve resilient quantized consensus with asynchronous updates and delays. 
We have accordingly proved necessary and sufficient conditions for our algorithm to succeed under the malicious and Byzantine models. Our method requires sparser networks compared to the heavy graph requirements for the one-hop algorithms.
Compared to the existing methods studying binary consensus, our algorithm considers the more general $l$-hop case and can handle the binary and integer-valued consensus problems. Lastly, our algorithm is efficient in using less relay hops to achieve the same level of tolerance for adversary agents compared to the flooding algorithms. In future work, we intend to characterize tight graph conditions for our algorithm to succeed in time-varying networks.

\appendices

\section*{Appendix}

\section*{A. Proof of Lemma \ref{lemma2f}}

\begin{proof}
	We first claim that $\mathcal{G}=\mathcal{K}_{n_1,n_2}$ reaches its maximum robustness with $l\geq 2$ hops. To this end, consider any node $i\in \mathcal{V}_1$. It connects with all other nodes in $\mathcal{V}_1$ with two-hop paths and further connects with all nodes in $\mathcal{V}_2$ with three-hop paths. Notice that nodes in $\mathcal{V}_2$ are direct neighbors of node $i\in \mathcal{V}_1$. Hence, by Definition~\ref{rs-robust}, the robustness of $\mathcal{G}$ will not increase for $l\geq 2$ hops.

	Proposition~7.1 in \cite{yuan2021resilient} indicates that our condition ($(f +1 , f +1)$-robust with $l$ hops) is equivalent to the one in \cite{khan2020exact} for the unbounded path length case. Their condition is given by: (i) The minimum in-degree of the graph is $2f$ and (ii) the graph is $\lfloor \frac{3f }{2} \rfloor$-connected. Furthermore, we conclude that $\mathcal{G}$ in Fig.~\ref{example_graphs}(a) is $d$-connected. Thus, if $\mathcal{G}$ satisfies condition (i), then it also meets condition (ii). Therefore, $\mathcal{G}$ is $(\lfloor \frac{d }{2} \rfloor +1 , \lfloor \frac{d }{2} \rfloor +1)$-robust with $l^*$ hops. By our first claim, we conclude that $\mathcal{G}$ is $(\lfloor \frac{d }{2} \rfloor +1 , \lfloor \frac{d }{2} \rfloor +1)$-robust with $l\geq 2$ hops under the $\lfloor \frac{d }{2} \rfloor$-total model.
\end{proof}

\section*{B. Proof of Lemma \ref{lemmastarcyle}}

\begin{proof}
	The longest cycle-free path length in $\mathcal{G}=\mathcal{W}_n$ is $l^*=n-1$. To proceed, consider two cases for the $1$-total set $\mathcal{F}=\{m\}$: (i) When node $m$ is on the cycle and (ii) when node $m$ is the centering node. Recall in Definition~\ref{strict_robustness} that $\mathcal{G}_{\mathcal{H}}$ is the remaining graph after removing the set $\mathcal{F}$ from $\mathcal{G}$.
	
	For case (i), we can easily observe that after removing node $m$ from the cycle, the remaining graph $\mathcal{G}_{\mathcal{H}}$ is $2$-robust (with $1$ hop).

	For case (ii), after removing the centering node $m$, the remaining graph $\mathcal{G}_{\mathcal{H}}$ is a cycle graph. Then, we will prove that $\mathcal{G}_{\mathcal{H}}$ is $2$-robust with $\lfloor l^*/4 \rfloor+1$ hops. Hence, by Definition~\ref{rs-robust} (with $r=2$, $s=1$), we need to prove that for every pair of nonempty, disjoint subsets $\mathcal{V}_\text{1},\mathcal{V}_\text{2}\subset \mathcal{V}$, at least one node in $\mathcal{V}_\text{1} \cup\mathcal{V}_\text{2}$ has 2 independent paths originating from the nodes outside its corresponding set. We conclude that when $\mathcal{V}_\text{1}$ and $\mathcal{V}_\text{2}$ are selected as those in Fig.~\ref{example_graphs}(b), the centering node in $\mathcal{V}_\text{1}$ or $\mathcal{V}_\text{2}$ has 2 shortest paths from the nodes outside the set. In this case, the length of such paths is of $\lfloor l^*/4 \rfloor+1$ hops. Note that for the nodes other than the centering node in $\mathcal{V}_\text{1}$ or $\mathcal{V}_\text{2}$, they have a shorter path and a longer path from the nodes outside the set, compared to the centering node's paths. Thus, they require longer hops for satisfying the robustness notion.

	Therefore, we conclude that the wheel graph $\mathcal{G}$ is $2$-strictly robust with $\lfloor l^*/4 \rfloor+1$ hops under the $1$-total model.
\end{proof}

\section*{C. Proof of Theorem \ref{asyndetermin}}

\begin{proof}
	(Necessity) The synchronous network is a special case of the asynchronous network. Thus, the necessary condition for synchronous case holds for asynchronous case.
	
	(Sufficiency) We show that the conditions (C1)–(C3) in Lemma \ref{synlemma} hold. 
	To show (C1), define the minimum and maximum values of the normal agents at time
	$k$ and the previous $\tau$ time steps by
	\begin{equation}
		\begin{array}{lll} 
			\overline{z}[k] =\max \left( x^N[k], x^N[k-1],\dots, x^N[k-\tau]\right),\\
			\underline{z}[k] = \min \left( x^N[k], x^N[k-1],\dots, x^N[k-\tau]\right).
		\end{array}
	\end{equation}
	Then we prove that $\overline{z}[k]$ is a nonincreasing function.
	By \eqref{system2}, at time $k \geq 1$, each normal agent updates its value based on a quantized convex combination of the neighbors' values from $k$ to $k-\tau$. We know from step 2 of Algorithm~1 that the values outside of the interval $[\underline{z}[k],\overline{z}[k]]$ will be ignored. Hence, we obtain $x_i[k+1] \leq \max \left( x^N[k], x^N[k-1],\dots, x^N[k-\tau]\right)=\overline{z}[k]$, $\forall i \in \mathcal{N}$. It also follows that
	\begin{equation*}
		\begin{aligned}
			x_i[k] &\leq \max \left( x^N[k], x^N[k-1],\dots, x^N[k-\tau]\right),\\
			x_i[k-1] &\leq \max \left( x^N[k], x^N[k-1],\dots, x^N[k-\tau]\right),\\
			&\vdots\\
			x_i[k+1-\tau]&\leq \max \left( x^N[k], x^N[k-1],\dots, x^N[k-\tau]\right),
		\end{aligned}
	\end{equation*}
	$\forall i \in \mathcal{N}$. Therefore, we have
	\begin{equation*}
		\begin{aligned}
			\overline{z}[k+1] &=\max \left( x^N[k+1], x^N[k],\dots, x^N[k+1-\tau]\right)\\
			&\leq \max \left( x^N[k], x^N[k-1],\dots, x^N[k-\tau]\right)=\overline{z}[k].
		\end{aligned}
	\end{equation*}
	We can similarly prove that $\underline{z}[k]$ is nondecreasing in time, i.e., $\underline{z}[k+1]\geq \underline{z}[k]$. This indicates that for $k \geq 1$, we have $x_i[k]\in \mathcal{S}_{\tau}$ for $ i \in \mathcal{N}$. When $k=0$, the safety condition clearly holds. Thus, (C1) holds $\forall k\geq 0$.
	
	Next, we show (C2). Since $\overline{z}[k]$ and $\underline{z}[k]$ are
	contained in $\mathcal{S}_{\tau}$ and are monotone, there is a finite time $k_c$ such
	that they both reach their final values with probability 1, denoted by
	$\overline{z}^*$ and $\underline{z}^*$,
	respectively. We will prove by contradiction that $\overline{z}^*=\underline{z}^*$.
	Assume $\overline{z}^*>\underline{z}^*$. When $k\geq k_c$, we define the sets
	\begin{equation*}
		\begin{aligned}
			\mathcal{Z}_1[k]&=\{i\in \mathcal{N}: x_i[k]= \overline{z}^*\},\\
			\mathcal{Z}_2[k]&=\{i\in \mathcal{N}: x_i[k]= \underline{z}^*\}.
		\end{aligned}
	\end{equation*}
	In the following, we will show that with positive probability,
	\begin{align}\label{shrinking_deterministic}
		|\mathcal{Z}_1[k] \cup  \mathcal{Z}_2[k] |  >  |\mathcal{Z}_1[k+1]  \cup  \mathcal{Z}_2[k+1] |.
	\end{align}
	This will lead us to the desired contradiction.

	Due to the convergence of $\overline{z}[k]$ and $\underline{z}[k]$, we have
	\begin{align}\label{nonempty_deterministic}
		\bigcup_{\gamma=0}^{\tau} \mathcal{Z}_1[k_c+\gamma] \neq \emptyset, \medspace\medspace
		\bigcup_{\gamma=0}^{\tau} \mathcal{Z}_2[k_c+\gamma] \neq \emptyset.
	\end{align}
	We claim that $\mathcal{Z}_1[k_c] \neq \emptyset $. To prove this, it is sufficient to show that if $\mathcal{Z}_1[k_c+\gamma] = \emptyset$, then the probability of $\mathcal{Z}_1[k_c+\gamma+1] = \emptyset$ is nonzero for $\gamma=0,\dots,\tau$, which contradicts \eqref{nonempty_deterministic}.
	
	First, we show that with positive probability,
	\begin{align}\label{notgetin_deterministic}
		(\mathcal{N} \setminus \mathcal{Z}_1[k_c+\gamma])  &\cap \mathcal{Z}_1[k_c+\gamma+1] = \emptyset, \nonumber\\
		(\mathcal{N} \setminus \mathcal{Z}_2[k_c+\gamma])  &\cap \mathcal{Z}_2[k_c+\gamma+1] = \emptyset. 
	\end{align}
	Clearly, if there is no node updating at time $k_c+\gamma$, then
	no node can enter $\mathcal{Z}_1[k_c+\gamma+1]$. However, we know that
	each normal node makes an update at least once in $\overline{k}\leq \tau$ time steps.
	Assume that normal node $i$ makes an update at time $k_c+\gamma$.
	Since $\mathcal{Z}_1[k_c+\gamma] = \emptyset $ by assumption,
	$\forall i \in \mathcal{N} \setminus \mathcal{Z}_1[k]$, we have
	\begin{align*}
		x_i[k_c+\gamma] \leq \overline{z}^*-1 \medspace, \medspace \forall  \gamma.
	\end{align*}
	Also notice that all the values larger than $\overline{z}^*$ come from adversary nodes and are removed by step 2 of Algorithm~1. Then we have
	\begin{align}\label{getout2_deterministic}
		x_i[k_c+\gamma+1] &\leq Q\big((1-\alpha)\overline{z}^*+ \alpha(\overline{z}^*-1) \big)  \nonumber\\
		&=Q\left(\overline{z}^*-\alpha \right),
	\end{align}
	where the quantizer takes $Q\left(\overline{x}^*-\alpha \right)=\overline{x}^*-1$ with probability $\alpha$. Therefore, we have proved \eqref{notgetin_deterministic}. Moreover, it must be $\mathcal{Z}_1[k_c] \neq \emptyset $. Similarly, $\mathcal{Z}_2[k_c] \neq \emptyset $ can also be proved.

	Next, we show that with positive probability, 
	\begin{align}\label{nodegetout_deterministic}
		&\{i \in \big(\mathcal{Z}_1[k_c+\gamma] \cup  \mathcal{Z}_2[k_c+\gamma] \big): \nonumber \\
		&\medspace\medspace\medspace\medspace\medspace\medspace\medspace\medspace
		i \notin \big(\mathcal{Z}_1[k_c+\gamma+1] \cup  \mathcal{Z}_2[k_c+\gamma+1] \big) \} \neq \emptyset.
	\end{align}
	Since $\mathcal{G}$ is $(f + 1)$-strictly robust with $l$ hops, by Definition~\ref{strict_robustness}, the normal network must be $(f + 1)$-robust with $l$ hops. Then, for nonempty and disjoint subsets $\mathcal{Z}_1[k_c+\gamma]$ and $\mathcal{Z}_2[k_c+\gamma]$, at least one of the following conditions holds:
	%1) $\mathcal{X}_{\mathcal{Z}_1}^{f+1}=\mathcal{Z}_1$, \medspace\medspace
	%2) $\mathcal{X}_{\mathcal{Z}_2}^{f+1}=\mathcal{Z}_2$, \medspace\medspace
	%3) $| \mathcal{X}_{\mathcal{Z}_1}^{f+1}| +| \mathcal{X}_{\mathcal{Z}_2}^{f+1}| \geq 1$.
	\begin{enumerate}
		\item $\mathcal{X}_{\mathcal{Z}_1[k_c+\gamma]}^{f+1}=\mathcal{Z}_1[k_c+\gamma]$;
		\item $\mathcal{X}_{\mathcal{Z}_2[k_c+\gamma]}^{f+1}=\mathcal{Z}_2[k_c+\gamma]$;
		\item  $| \mathcal{X}_{\mathcal{Z}_1[k_c+\gamma]}^{f+1}| +| \mathcal{X}_{\mathcal{Z}_2[k_c+\gamma]}^{f+1}| \geq 1$.
	\end{enumerate}
	\noindent Thus, there always exists a normal node $i \in \mathcal{N}$ either in $\mathcal{Z}_1[k_c+\gamma]$ or $\mathcal{Z}_2[k_c+\gamma]$ such that $| \mathcal{I}_{i, \mathcal{Z}_1[k_c+\gamma]}^{\mathcal{A}} | \geq f+1$, or $| \mathcal{I}_{i, \mathcal{Z}_2[k_c+\gamma]}^{\mathcal{A}} | \geq f+1$. Without loss of generality, assume that node $i\in \mathcal{Z}_1[k_c+\gamma]$. Then, we have
	\begin{align*}
		x_i[k_c+\gamma]=\overline{z}^* \medspace, \medspace \forall  \gamma.
	\end{align*}
	Due to the $f$-total/local model of adversary nodes, node $i$ will discard all the values larger than $\overline{z}^*$ by step 2 of Algorithm~1, and it uses at least one value from the normal node $j\in \mathcal{N} \setminus \mathcal{Z}_1[k_c+\gamma]$, where
	\begin{align*}
		x_i[k_c+\gamma]\leq \overline{z}^*-1 \medspace, \medspace \forall  \gamma.
	\end{align*}
	Thus, \eqref{getout2_deterministic} holds for node $i$ too.
	Therefore, with positive probability, if node $i\in \mathcal{Z}_1[k_c+\gamma]$, then it gets out of $\mathcal{Z}_1[k_c+\gamma+1]$. Similarly, we can prove that with positive probability, if node $i\in \mathcal{Z}_2[k_c+\gamma]$, then it gets out of $\mathcal{Z}_2[k_c+\gamma+1]$. Thus, we have proved \eqref{nodegetout_deterministic}.

	Combining \eqref{notgetin_deterministic} and \eqref{nodegetout_deterministic}, we arrive at \eqref{shrinking_deterministic}. Therefore, for any $k \geq k_c + \overline{k}\cdot n_N $, it holds with positive probability that 
	\begin{align*}
		\big(\mathcal{Z}_1[k] \cup  \mathcal{Z}_2[k] \big)\cap \mathcal{N}= \emptyset,
	\end{align*}
	since the number of normal nodes is $n_N $ and each normal node will update at least once in $\overline{k}$ steps. We have the contradiction, and (C2) is proved.
	
	Lastly, we show (C3). Once the normal nodes have
	reached the common value $x^*$ at time $k$. Since $\left|\mathcal{N}_i^{l-} \cap \mathcal{A}\right| \leq f, \medspace \forall i \in \mathcal{N}$, according to step 2 of Algorithm~1, any node $j \in \mathcal{A}$ with value $x_j [k] \neq x^*$ is ignored by the normal nodes. Thus, $x_i [k + 1] = x^*, \forall i\in \mathcal{N}$. The proof is completed.
\end{proof}

\section*{D. Proof of Theorem \ref{theorem_byzan}}

\begin{proof}
	The necessity part is similar to the real-valued consensus in our previous work \cite{yuan2022asynchronous} and it is omitted.
	
	For the sufficiency part, we need to show the conditions (C1)–(C3) in Lemma \ref{synlemma} hold. Note that in the proof of Theorem \ref{asyndetermin}, conditions (C1) and (C3) still hold for Byzantine attacks. For condition (C2), the sets $\mathcal{Z}_1[k]$ and $\mathcal{Z}_2[k]$ are defined as the subsets of $\mathcal{N}$, i.e.,
	\begin{equation*}
		\begin{aligned}
			\mathcal{Z}_1[k]&=\{i\in \mathcal{N}: x_i[k]= \overline{z}^*\},\\
			\mathcal{Z}_2[k]&=\{i\in \mathcal{N}: x_i[k]= \underline{z}^*\},
		\end{aligned}
	\end{equation*}
	where $\overline{z}^*$ and $\underline{z}^*$ are, respectively, the final values (reached with probability 1) of the sequences of the maximum and minimum values of normal nodes across each $\tau+1$ time steps.
	Besides, node $i$ in one of these two sets uses at least one value from the normal nodes outside its corresponding set. Both situations are applied for normal nodes, and thus, the proof of Theorem \ref{asyndetermin} still holds even under the Byzantine model.
\end{proof}

\section*{References}
\vspace{-4mm}


\begin{thebibliography}{00}
	\bibitem{Lynch} N. A. Lynch, \emph{Distributed Algorithms.} Morgan Kaufmann, 1996.
	
	
	
	%\bibitem{kikuya2017fault} Y. Kikuya, S. M. Dibaji, and H. Ishii, ``Fault-tolerant clock synchronization over unreliable channels in wireless sensor networks,'' \emph{IEEE Transactions on Control of Network Systems}, vol. 5, no. 4, pp. 1551--1562, 2017.
	
	\bibitem{teixeira2012attack} A. Teixeira, D. P{\'e}rez, H. Sandberg, and K. H. Johansson, ``Attack models and scenarios for networked control systems,'' in \emph{Proc. 1st International Conference on High Confidence Networked Systems}, 2012, pp. 55--64.
	
	%\bibitem{nugraha2021dynamic} Y. Nugraha, A. Cetinkaya, T. Hayakawa, H. Ishii, and Q. Zhu, ``Dynamic resilient network games with applications to multi-agent consensus,'' \emph{IEEE Transactions on Control of Network Systems}, vol. 8, no. 1, pp. 246--259, 2021.
	
	
	\bibitem{yuan2023event} L. Yuan and H. Ishii, ``Event-triggered approximate Byzantine consensus with multi-hop communication,'' \emph{IEEE Transactions on Signal Processing}, vol. 71, pp. 1742--1754, 2023.
	
	
	%\bibitem{ishii2022overview} H. Ishii, Y. Wang, and S. Feng, ``An overview on multi-agent consensus under adversarial attacks,'' \emph{Annual Reviews in Control}, vol. 53, pp. 252--272, 2022.
	
	
	
	%\bibitem{yang2013consensus} S. Yang, S. Tan, and J. Xu, ``Consensus based approach for economic dispatch problem in a smart grid,'' \emph{IEEE Transactions on Power Systems}, vol. 28, no. 4, pp. 4416--4426, 2013.
	
	%\bibitem{mitra2021} A. Mitra, F. Ghawash, S. Sundaram, and W. Abbas, ``On the impacts of redundancy, diversity, and trust in resilient distributed state estimation,'' \emph{IEEE Transactions on Control of Network Systems}, vol. 8, no. 2, pp. 713--724, 2021.
	
	%\bibitem{nedic2009distributed} A. Nedic and A. Ozdaglar, ``Distributed subgradient methods for multi-agent optimization,'' \emph{IEEE Transactions on Automatic Control}, vol. 54, no. 1, pp. 48--61, 2009.
	
	%\bibitem{sundaram2018distributed} S. Sundaram and B. Gharesifard, ``Distributed optimization under adversarial nodes,'' \emph{IEEE Transactions on Automatic Control}, vol. 64, no. 3, pp. 1063--1076, 2018.
	
	%\bibitem{wang2022resilient} Y. Wang, H. Ishii, F. Bonnet, and X. Defago, ``Resilient real-valued consensus in spite of mobile malicious agents on directed graphs,'' \emph{IEEE Transactions on Parallel and Distributed Systems}, vol. 33, no. 3, pp. 586--603, 2022.	
	
	
	\bibitem{goldsmith2005wireless} A. Goldsmith, \emph{Wireless Communications}. Cambridge University Press, 2005.
	
	
	
	
	
	\bibitem{leblanc2013resilient} H. J. LeBlanc, H. Zhang, X. Koutsoukos, and S. Sundaram, ``Resilient asymptotic consensus in robust networks,'' \emph{IEEE Journal on Selected Areas in Communications}, vol. 31, no. 4, pp. 766--781, 2013.
	
	%\bibitem{yuan2021secure} L. Yuan and H. Ishii, ``Secure consensus with distributed detection via two-hop communication,'' \emph{Automatica}, vol. 131, art. no. 109775, 2021.
	
	
	\bibitem{vaidya2012iterative} N. H. Vaidya, L. Tseng, and G. Liang, ``Iterative approximate Byzantine consensus in arbitrary directed graphs,'' in \emph{Proc. ACM Symposium on Principles of Distributed Computing}, 2012, pp. 365--374.
	
	
	
	%\bibitem{dibaji2017resilient} S. M. Dibaji and H. Ishii, ``Resilient consensus of second-order agent networks: Asynchronous update rules with delays,'' \emph{Automatica}, vol. 81, pp. 123--132, 2017.
	
 
	\bibitem{yuan2021resilient} L. Yuan and H. Ishii, ``Resilient consensus with multi-hop communication,'' \emph{IEEE Transactions on Automatic Control}, to appear, 2025. Also, \emph{arXiv preprint}, arXiv:2201.03214, 2022.
	
	\bibitem{usevitch2020determining} J. Usevitch and D. Panagou, ``Determining \textit{r}- and (\textit{r, s})-robustness of digraphs using mixed integer linear programming,'' \emph{Automatica}, vol. 111, art. no. 108586, 2020.
	
	\bibitem{su2017reaching} L. Su and N. H. Vaidya, ``Reaching approximate Byzantine consensus with multi-hop communication,'' \emph{Information and Computation}, vol. 255, pp. 352--368, 2017.
	
	%\bibitem{yuan2022asynchronous} L. Yuan and H. Ishii, ``Asynchronous approximate Byzantine consensus via multi-hop communication,'' in \emph{Proc. American Control Conference}, 2022, pp. 755--760.

	
	\bibitem{yuan2022asynchronous} L. Yuan and H. Ishii, ``Asynchronous approximate Byzantine consensus: A multi-hop relay method and tight graph conditions,'' \emph{Automatica}, vol. 171, art. no. 111908, 2025.
	
	
	\bibitem{sakavalas2020asynchronous} D. Sakavalas, L. Tseng, and N. H. Vaidya, ``Asynchronous Byzantine approximate consensus in directed networks,'' in \emph{Proc. ACM Symposium on Principles of Distributed Computing}, 2020, pp. 149--158.
	
	
	%\bibitem{jin2006multi} Z. Jin and R. M. Murray, ``Multi-hop relay protocols for fast consensus seeking,'' in \emph{Proc. IEEE Conference on Decision and Control}, 2006, pp. 1001--1006.
	
	%\bibitem{zhao2016global} Z. Zhao and Z. Lin, ``Global leader-following consensus of a group of general linear systems using bounded controls,'' \emph{Automatica}, vol. 68, pp. 294--304, 2016.
	
	%\bibitem{walke2007ieee} B. H. Walke, S. Mangold, and L. Berlemann, \emph{IEEE 802 Wireless Systems: Protocols, Multi-hop Mesh/Relaying, Performance and Spectrum Coexistence}. John Wiley $\&$ Sons, 2007.
	
	
	
	
	
	
	
	
	

	
	%\bibitem{cai2011quantized} 	K. Cai and H. Ishii, ``Quantized consensus and averaging on gossip digraphs,'' \emph{IEEE Transactions on Automatic Control}, vol. 56, no. 9, pp. 2087–2100, 2011. 
	
	
	
	\bibitem{chamie2016design} 	M. E. Chamie, J. Liu, and T. Basar, ``Design and analysis of distributed averaging with quantized communication,'' \emph{IEEE Transactions on Automatic Control}, vol. 61, no. 12, pp. 3870–3884, 2016.
	
	\bibitem{aysal2008distributed} T. C. Aysal, M. J. Coates, and M. G. Rabbat, ``Distributed average consensus with dithered quantization,'' \emph{IEEE Transactions on Signal Processing}, vol. 56, no. 10, pp. 4905--4918, 2008.
	
	%\bibitem{lavaei2011quantized} 	J. Lavaei and R. M. Murray, ``Quantized consensus by means of gossip algorithm,'' \emph{IEEE Transactions on Automatic Control}, vol. 57, no. 1, pp. 19–32, 2011.
	
	%\bibitem{carli2010gossip} 	R. Carli, F. Fagnani, P. Frasca, and S. Zampieri, ``Gossip consensus algorithms via quantized communication,'' \emph{Automatica}, vol. 46, no. 1, pp. 70–80, 2010.
	
	\bibitem{dibaji2018resilient} S. M. Dibaji, H. Ishii, and R. Tempo, ``Resilient randomized quantized consensus,'' \emph{IEEE Transactions on Automatic Control}, vol. 63, no. 8, pp. 2508--2522, 2018.
	
	%\bibitem{wang2020event} Y. Wang and H. Ishii, ``An event-triggered approach to quantized resilient consensus,'' \emph{International Journal of Robust and Nonlinear Control}, vol. 30, no. 11, pp. 4188--4204, 2020.
	
	
	\bibitem{dolev1982byzantine} D. Dolev, ``The Byzantine generals strike again,'' \emph{Journal of Algorithms}, vol. 3, no. 1, pp. 14--30, 1982.
	
	\bibitem{tseng2015fault} L. Tseng and N. H. Vaidya, ``Fault-tolerant consensus in directed graphs,'' in \emph{Proc. ACM Symposium on Principles of Distributed Computing}, 2015, pp. 451--460.
	
	
	
	\bibitem{fischer1985impossibility} M. J. Fischer, N. A. Lynch, and M. S. Paterson, ``Impossibility of distributed consensus with one faulty process,'' \emph{Journal of the ACM}, vol. 32, no. 2, pp. 374--382, 1985.
	
	
	\bibitem{benor1983another} M. Ben-Or, ``Another advantage of free choice (Extended Abstract): Completely asynchronous agreement protocols,'' in \emph{Proc. ACM Symposium on Principles of Distributed Computing}, 1983, pp. 27--30.
	
	
	\bibitem{wang2020asynchronous} Y. Wang and R. Wattenhofer, ``Asynchronous Byzantine agreement in incomplete networks,'' in \emph{Proc. ACM Conference on Advances in Financial Technologies}, 2020, pp. 178--188.
	
	
	
	
	
	%\bibitem{dolev1986reaching} D. Dolev, N. A. Lynch, S. S. Pinter, E. W. Stark, and W. E. Weihl, ``Reaching approximate agreement in the presence of faults,'' \emph{Journal of the ACM}, vol. 33, no. 3, pp. 499--516, 1986.
	
	
	
	\bibitem{cachin2001secure} C. Cachin, K. Kursawe, F. Petzold, and V. Shoup, ``Secure and efficient asynchronous broadcast protocols,'' in \emph{Proc. Annual International Cryptology Conference}, 2001, pp. 524--541.
	
	
	
	\bibitem{abraham2019asymptotically} I. Abraham, D. Malkhi, and A. Spiegelman, ``Asymptotically optimal validated asynchronous Byzantine agreement,'' in \emph{Proc. ACM Symposium on Principles of Distributed Computing}, 2019, pp. 337--346.
	
	
	%\bibitem{khan2019exact} M. S. Khan, S. S. Naqvi, and N. H. Vaidya, ``Exact Byzantine consensus on undirected graphs under local broadcast model,'' in \emph{Proc. ACM Symposium on Principles of Distributed Computing}, 2019, pp. 327--336.
	
	\bibitem{khan2020exact} M. S. Khan, L. Tseng, and N. H. Vaidya, ``Exact Byzantine consensus on arbitrary directed graphs under local broadcast model,'' in \emph{Proc. International Conference on Principles of Distributed Systems}, 2019, pp. 30:1--16.
	
	\bibitem{yuan2023resilient} L. Yuan and H. Ishii, ``Resilient quantized consensus with multi-hop communication,'' in \emph{Proc. IEEE Conference on Decision and Control}, 2023, pp. 2060--2065. 
	
	%\bibitem{olfati2007consensus} R. Olfati-Saber, J. A. Fax, and R. M. Murray, ``Consensus and cooperation in networked multi-agent systems,'' \emph{Proc. IEEE}, vol. 95, no. 1, pp. 215--233, 2007.
	
	\bibitem{bullo2009distributed} F. Bullo, J. Cort{\'e}s, and S. Martinez, \emph{Distributed Control of Robotic Networks: A Mathematical Approach to Motion Coordination Algorithms.} Princeton University Press, 2009.
	
	
	%\bibitem{mesbahi2010graph} M. Mesbahi and M. Egerstedt, \emph{Graph Theoretic Methods in Multi-agent Networks.} Princeton University Press, 2010.
	
	
	
	
	\bibitem{senejohnny2019resilience} D. M. Senejohnny, S. Sundaram, C. De Persis, and P. Tesi, ``Resilience against misbehaving nodes in asynchronous networks,'' \emph{Automatica}, vol. 104, pp. 26--33, 2019.
	
	
	\bibitem{west2001introduction} D. B. West, \emph{Introduction to Graph Theory}. 2nd ed., Pearson, 2001.
	
	\bibitem{qin2012sufficient} J. Qin and H. Gao, ``A sufficient condition for convergence of sampled-data consensus for double-integrator dynamics with nonuniform and time-varying communication delays,'' \emph{IEEE Transactions on Automatic Control}, vol. 57, no. 9, pp. 2417--2422, 2012.
	
	
	
	
	
	%\bibitem{tseng2015broadcast} L. Tseng, N. Vaidya, and V. Bhandari, ``Broadcast using certified propagation algorithm in presence of Byzantine faults,'' \emph{Information Processing Letters}, vol. 115, no. 4, pp. 512–514, 2015.
	
	
	%\bibitem{azadmanesh2002asynchronous} M. Azadmanesh and R. Kieckhafer, ``Asynchronous approximate agreement in partially connected networks,'' \emph{International Journal of Parallel and Distributed Systems and Networks}, vol. 5, no. 1, pp. 26--34, 2002.
	
 
	
	
	
	
	 
	
\end{thebibliography}
\end{document}